\documentclass[11pt]{article}
\usepackage[paperwidth=6.9in, paperheight=10in, textwidth=5.25in, textheight=8 in]{geometry}
\usepackage{amsmath,amsthm,amssymb,thmtools}
\usepackage{hyperref}
\usepackage{bookmark}
\usepackage[ruled,vlined]{algorithm2e}
\usepackage{multirow}
\usepackage{tcolorbox}
\usepackage{caption}
\usepackage{colortbl}
\usepackage{boldline, bigdelim, booktabs, cellspace, array, makecell}
\usepackage{tikz}
\usepackage{tabu}
\usepackage{threeparttable}
\usepackage[classfont=caps,funcfont=caps,langfont=caps]{complexity}
\tcbuselibrary{theorems}

\newcommand{\ints}{\ensuremath{\mathbb{Z}}}

\newcommand{\RR}{\ensuremath{\mathbb{R}}}
\newcommand{\posreals}{\ensuremath{\mathbb{R}_{\geq 0}}}

\newcommand{\prob}{\mathrm{Pr}}

\newcommand*{\esp}[1]{\operatorname{\mathbb{E}}[ #1 ]}

\newcommand{\SLS}{\textsc{StreamingLocalSearch}}
\newcommand{\mSLS}{\textsc{MultipassLocalSearch}}
\newcommand{\rSLS}{\textsc{RandomizedLocalSearch}}
\newcommand{\mrSLS}{\textsc{MultipassRandomizedLocalSearch}}

\newcommand{\opt}{\mathit{OPT}}
\newcommand{\fopt}{f\lb \opt \rb}

\newcommand{\Sinit}{S_{\mathrm{init}}}

\newcommand{\gammaoff}{\bar{\gamma}_{\mathrm{off}}}

\newcommand{\e}{\varepsilon}
\newcommand{\bb}{\backslash}
\setlength\aboverulesep{0pt}
\setlength\belowrulesep{0pt}

\makeatletter
\newcommand{\thickhline}{%
    \noalign {\ifnum 0=`}\fi \hrule height 1pt
    \futurelet \reserved@a \@xhline
}
\newcolumntype{"}{@{\hskip\tabcolsep\vrule width 1pt\hskip\tabcolsep}}
\newcolumntype{?}{!{\vrule width 1pt}}
\makeatother

\newtheorem{theorem}{Theorem}
\newtheorem{lemma}{Lemma}[section]

\newcommand{\lc}{\!\left\{}
\newcommand{\rc}{\right\}}
\newcommand{\ld}{\!\left[}
\newcommand{\rd}{\right]}
\newcommand{\lb}{\!\left(}
\newcommand{\rb}{\right)}
\newcommand{\card}[1]{\left\vert #1\right\vert}
\DeclareMathOperator*{\argmin}{arg\,min}

\DeclareMathOperator*{\argmax}{arg\,max}


\newcommand{\bigO}[1]{O\lb #1 \rb}


\newcommand{\cI}{\mathcal{I}}
\newcommand{\cM}{\mathcal{M}}

\SetArgSty{textnormal}

\begin{document}

\title{Improved Multi-Pass Streaming Algorithms for Submodular Maximization with Matroid Constraints\thanks{A preliminary version of this work was presented at the International Conference on Approximation Algorithms for Combinatorial Optimization Problems (APPROX 2020). This work was funded by the grants ANR-19-CE48-0016 and ANR-18-CE40-0025-01 from the French National Research Agency (ANR). This work was supported by EPSRC New Investigator Award EP/T006781/1.}
}


\author{Chien-Chung Huang\footnote{CNRS, DI ENS, Universit\'{e} PSL, Paris,  France, \href{chien-chung.huang@ens.fr}{chien-chung.huang@ens.fr}}        \and
        Theophile Thiery$^{\ddagger}$ \and
        Justin Ward\footnote{School of Mathematical Sciences, Queen Mary University of London, United Kingdom, Emails: \{\href{t.f.thiery}{t.f.thiery}, \href{justin.ward}{justin.ward}\}@qmul.ac.uk}
}


\date{}

\maketitle
\vspace{-2em}
\begin{abstract}
    We give improved multi-pass streaming algorithms for the problem of maximizing a monotone or arbitrary non-negative submodular function subject to a general $p$-matchoid constraint in the model in which elements of the ground set arrive one at a time in a stream. The family of constraints we consider generalizes both the intersection of $p$ arbitrary matroid constraints and $p$-uniform hypergraph matching. For monotone submodular functions, our algorithm attains a guarantee of $p+1+\e$ using $O(p/\e)$-passes and requires storing only $O(k)$ elements, where $k$ is the maximum size of feasible solution. This immediately gives an $O(1/\e)$-pass $(2+\e)$-approximation algorithms for monotone submodular maximization in a matroid and $(3+\e)$-approximation for monotone submodular matching. Our algorithm is oblivious to the choice $\e$ and can be stopped after any number of passes, delivering the appropriate guarantee. We extend our techniques to obtain the first multi-pass streaming algorithm for general, non-negative submodular functions subject to a $p$-matchoid constraint with a number of passes independent of the size of the ground set and $k$. We show that a randomized $O(p/\e)$-pass algorithm storing $O(p^3k\log(k)/\e^3)$ elements gives a $(p+1+\gammaoff+O(\e))$-approximation, where $\gammaoff$ is the guarantee of the best-known offline algorithm for the same problem.
    \end{abstract}

    \section{Introduction}
    \label{sec:introduction}
    Many discrete optimization problems in theoretical computer science, operations research, and machine learning can be cast as special cases of maximizing a \emph{submodular} function $f$ subject to some constraint. Formally, a function $f : 2^X \to \posreals$ is submodular if and only if $f(A) + f(B) \geq f(A \cup B) + f(A \cap B)$ for all $A,B \subseteq X$. One reason for the ubiquity of submodularity in optimization settings is that it also captures a natural ``diminishing returns'' property. Let $f(e \mid A) \triangleq f(A + e) - f(A)$ be the \emph{marginal increase} obtained in $f$ when adding an element $e$ to a set $A$ (where here and throughout we use the shorthands $A+e$ and $A-e$ for $A \cup \{e\}$ and $A \bb \{e\}$, respectively). It is well-known that $f$ is submodular if and only if $f(e \mid B) \leq f(e \mid A)$ for any $A \subseteq B$ and any $e \not\in B$. If additionally we have $f(e \mid A) \geq 0$ for all $A$ and $e \not\in A$ we say that $f$ is \emph{monotone}.

    Here, we consider the problem of maximizing both monotone and arbitrary submodular functions subject to an arbitrary \emph{$p$-matchoid constraint} on the set of elements that can be selected. Formally, a $p$-matchoid $\cM^p = (\cI^p,X)$ on $X$ is given by a collection of matroids $\{\cM_i = (X_i,\cI_i)\}$ each defined on some subset of $X$, where each $e \in X$ is present in at most $p$ of these subsets. A set $S \subseteq X$ is then independent if and only if $S \cap X_i \in \cI_i$ for each matroid $\cM_i$. One can intuitively think of a $p$-matchoid as a collection of matroids in which each element ``participates'' in at most $p$ of the matroid constraints. The resulting family of constraints is quite general and captures both intersections of $p$ matroid constraints (by letting $X_i = X$ for all $\cM_i$) and matchings in $p$-uniform hypergraphs (by considering $X$ as a collection of hyperedges and defining a uniform matroid constraint for each vertex, ensuring that at most one hyperedge containing this vertex is selected).

    In many applications of submodular optimization, such as summarization~\cite{Badanidiyuru:2014ib,lin:2010wpa, mirzasoleiman2016fast,mirzasoleiman18streaming} we must process datasets so large that they cannot be stored in memory. Thus, there has been recent interest in \emph{streaming} algorithms for submodular optimization problems. In this context, we suppose the ground set $X$ is initially unknown and elements arrive one-by-one in a stream. We suppose that the algorithm has an efficient oracle for evaluating the submodular function $f$ on any given subset of $X$, but has only enough memory to store a small number of elements from the stream. Variants of standard greedy and local search algorithms have been developed that obtain a constant-factor approximation in this setting, but their approximation guarantees are considerably worse than that of their simple, offline counterparts.

    Here, we consider the \emph{multi-pass} setting in which the algorithm is allowed to perform several passes over a stream---in each pass all of $X$ arrives in some order, and the algorithm is still only allowed to store a small number of elements. In the \emph{offline} setting, simple variants of greedy~\cite{FisherNemhauserWolsey} or local search~\cite{Feldman2011,Lee:2010} algorithms in fact give the best-known approximation guarantees for maximizing submodular functions subject to the $p$-matroid constraints or a general $p$-matchoid constraint.
    However, these algorithms potentially require considering all elements in $X$ \emph{each time a choice is made}. It is natural to ask whether this is truly necessary, or whether we could instead recover an approximation ratio nearly equal to these offline algorithms by using only a constant number of passes through the data stream.

    \subsection{Our Results}
    \label{sec:our-results}
    Here we show that for monotone submodular functions, $O(1/\e)$-passes suffice to obtain guarantees only $(1+\e)$ times worse than those guaranteed by the offline local search algorithm. We give an $O(p/\e)$-pass streaming algorithm that gives a $p+1+\e$ approximation for maximizing a monotone submodular function subject to an arbitrary $p$-matchoid constraint. It immediately gives us an $O(1/\e)$-pass streaming algorithm attaining a $2+\e$ approximation for matroid constraints and a $3+\e$ approximation for matching constraints in graphs.
    Each pass of our algorithm is equivalent to a single pass of the streaming local search algorithm described by Chakrabarti and Kale~\cite{DBLP:journals/mp/ChakrabartiK15} and Chekuri, Gupta, and Quanrud~\cite{DBLP:conf/icalp/ChekuriGQ15}. However, obtaining a rapid convergence to a $p+1+\e$ approximation requires some new insights. We show that if a pass makes either large or small progress in the value of $f$, then the guarantee obtained at the end of this pass can be improved. Balancing these two effects then leads to a carefully chosen sequence of parameters for each pass. Our general approach is similar to that of Chakrabarti and Kale~\cite{DBLP:journals/mp/ChakrabartiK15}, but our algorithm is \emph{oblivious} to the choice of $\e$.  This allows us to give a uniform bound on the convergence of the approximation factor obtained after some number $d$ of passes. This bound is actually available to the algorithm, and so we can certify the quality of the current solution after each pass. In practice, this allows for terminating the algorithm early if a sufficient guarantee has already been obtained. Even in the worst case, however, we improve on the number of passes required by similar previous results by a factor of  $O(\e^{-2})$. Our algorithm only requires storing $O(k)$ elements, where $k$ is the \emph{rank} of the given $p$-matchoid, defined as the size of the largest independent set of elements.

    Building on these ideas, we also give a randomized, multi-pass algorithm that uses $O(p/\e)$-passes and attains a $p+1+\gammaoff+O(\e)$ approximation for maximizing an arbitrary submodular function subject to a $p$-matchoid constraint, where $\gammaoff$ is the approximation ratio attained by best-known offline algorithm for the same problem. To the best of our knowledge, ours is the first multipass algorithm when the function is non-monotone with a number of passes independent of $n$ and $k$, where $n$ is the size of the ground set. In this case, our algorithm requires storing $O(p^3 k\log k/\e^3)$ elements. We remark that to facilitate comparison with existing work, we have stated all approximation guarantees as factors $\gamma \geq 1$. However, we note that if one states ratios of the form $1/\gamma$ less than 1, then our results lead to $1/\gamma - \e$ approximations in which all dependence on $p$ can be eliminated (by setting simply selecting some $\e' = p\e$).

    \subsection{Related Work}
    \begin{tcolorbox}[
      colframe=gray!75!black,
      colback=gray!7!white,
      size=small,
      title= Current State of the Art]
      \begin{center}
        \small{
          \begin{tabular}{|>{\columncolor{white}}c !{\vrule width 1pt}>{\columncolor{white}}c|>{\columncolor{white}}c !{\vrule width 1pt} >{\columncolor{white}}c|>{\columncolor{white}}c?}
            \hline
            &  \multicolumn{2}{c?}{\cellcolor{white}Offline} & \multicolumn{2}{c?}{\cellcolor{white} Streaming} \\ \cmidrule{2-5} \cellcolor{white}\multirow{-2}{*}{\cellcolor{white}Constraint}& M & NN & M & NN \\
            \thickhline
            \small matroid & \small$e/(e-1)$ \cite{Calinescu:2011ju} & \small$ 2.598$ \cite{Buchbinder:2016} & \small$4$~\cite{DBLP:journals/mp/ChakrabartiK15,DBLP:conf/icalp/ChekuriGQ15,Feldman2018DoLess}\!  & \small$5.8284$ \cite{Feldman2018DoLess} \\
            \hline
            \small $(p,b)$-hyp.m\! & \small$p+\e$ \cite{Feldman2011} & \small$\frac{p^2 + \e}{p-1}$ \cite{Feldman2011} \!& \small$4p$ \cite{DBLP:conf/icalp/ChekuriGQ15,Feldman2018DoLess} & \small$4p + 2 -o(1)$ \cite{Feldman2018DoLess}\! \\
            \hline
            \small $p$-mat.int & \small$p+ \e$ \cite{Lee:2010} & \small$\frac{p^2 + (p-1)\e}{p-1}$ \cite{Lee:2010}\!& \small$4p$ \cite{DBLP:journals/mp/ChakrabartiK15,DBLP:conf/icalp/ChekuriGQ15,Feldman2018DoLess}\!  & \small$4p + 2 -o(1)$ \cite{Feldman2018DoLess}\! \\
            \hline
            & & $\tfrac{ep}{(1-\e)(2-o(1))}$&  & \\
            \cellcolor{white}\multirow{-2}{*}{$p$-matchoid}& \cellcolor{white}\multirow{-2}{*}{$p+1$ \cite{Badanidiyuru:2013jc,FisherNemhauserWolsey}}& \cite{DBLP:journals/siamcomp/ChekuriVZ14,DBLP:conf/focs/FeldmanNS11}& \cellcolor{white}{\multirow{-2}{*}{$4p$ \cite{DBLP:conf/icalp/ChekuriGQ15,Feldman2018DoLess}}}& \cellcolor{white}{\multirow{-2}{*}{$4p + 2 -o(1)$ \cite{Feldman2018DoLess}\!}}\\
            \hline
           \end{tabular}
      \captionof{table}{Approximation ratio in offline and streaming setting}
       \label{tab:1}
      }
      \vspace{0.2cm}
       \small{
      \begin{tabular}{|>{\columncolor{white}}c?>{\columncolor{white}}c|>{\columncolor{white}}c|>{\columncolor{white}}c?>{\columncolor{white}}c|>{\columncolor{white}}c|>{\columncolor{white}}c?}
        \hline
       \cellcolor{white} & \multicolumn{3}{c?}{\cellcolor{white}Multipass} & \multicolumn{3}{c?}{\cellcolor{white}Our results}\\
        \cmidrule{2-7}
        \multirow{-2}{*}{\cellcolor{white}Constraint}& M & $\#$-passes & \!\!NN\!\!& M& NN & \!\!$\#$-passes \!\!\\
        \thickhline
        \small matroid & \small$2+\e$ \cite{DBLP:journals/mp/ChakrabartiK15}  & \small$O(1/\e^{3})$ \cite{DBLP:journals/mp/ChakrabartiK15}\! &$\ast$ & $2 + \e$ & $4.589 + \e$ &\!\!$O(1/\e)$\!\!\\
        \hline
        &  \!\!\! \!\!\!\!\!\!& &  & & & \\
        \multirow{-1}{*}{\cellcolor{white}$(p,b)$-hyp.m\!} & \multirow{-1}{*}{\!\!\!\small $p+1+\e$\!\!\! }\!& \multirow{-1}{*}{\!\!\!\small$O(p^4\log(p)/\e^3)$\!\!\!} & \multirow{-1}{*}{\cellcolor{white} \!\!$\ast$} & \multirow{-1}{*}{\cellcolor{white} \!\!\!\!$p+1+\e$\!\!\!}& &\multirow{-1}{*}{\cellcolor{white}\!$O(p/\e)$} \!\!\\
        & \!\!\! \cite{DBLP:journals/mp/ChakrabartiK15} \!\!\!\!\!\!& \!\!\! \cite{DBLP:journals/mp/ChakrabartiK15}\! \!\!\! &  & &\multirow{-3}{*}{\cellcolor{white}\vspace{-0.3em}\makecell{$p+1 + O(\e)$\\$ +\tfrac{p^2}{p-1}$}} & \\
        \hline
        & & &  & & & \\
        \multirow{-1}{*}{\cellcolor{white}$p$-mat.int\!}& \multirow{-1}{*}{\cellcolor{white}\!\!\!\small$p+1+\e$ \!\!\!\!\!\!}& \multirow{-1}{*}{\!\!\!\small$O(p^4\log(p)/\e^3)$\!\!\! }\!\!& \multirow{-1}{*}{\cellcolor{white} \!\!$\ast$} & \multirow{-1}{*}{\cellcolor{white} \!\!\!\!$p+1+\e$\!\!\!}& & \multirow{-1}{*}{\cellcolor{white}\!\!\!$O(p/\e)$}\!\!\\
        & \cellcolor{white}\cite{DBLP:journals/mp/ChakrabartiK15}& \cite{DBLP:journals/mp/ChakrabartiK15}\! &  & &\!\!\! \!\!\!\!\!\!\multirow{-3}{*}{\cellcolor{white}\vspace{-0.3em}\makecell{$p+1 + O(\e)$\\$ +\tfrac{p^2}{p-1} $}\!\!\!\!}\!\!\!& \\
        \hline
        \small $p$-matchoid & $\ast$ & $\ast$ & $\ast$ &\!\!\!\!$p+1+\e$\!\!\!\!\! &\makecell{$p+ 1+  O(\e)$\\$+\tfrac{ep}{(1- \e)(2 - o(1))}$} &\!\!\!$O(p/\e)$\!\!\\
        \hline
       \end{tabular}
       \captionof{table}{Summary of results for maximizing a submdodular function in the multipass streaming.}
        \label{tab:2}
       }
      \end{center}
      {We use the following abbreviations:
       M means monotone and NN means that $f$ is non-negative. $p$-mat.int means $p$-matroid intersection and $(p,b)$-hyp.m denotes rank $p$-hypergraph $b$-matching.\\
       $\ast$: If we restrict ourselves with algorithms performing $O(\poly(\e, p))$-passes then only the $1$-pass setting is understood.}
    \end{tcolorbox}

    There is a vast literature on submodular maximization with various constraints and different models of computation.
    In the offline model, the work on maximizing a monotone submodular function goes back to Nemhauser, Wolsey and Fischer \cite{Nemhauser:1978dm}.
    Monotone submodular functions are well studied and many new and powerful results have been obtained since then. The best approximation algorithm under a matroid constraint is due to Calinescu et al. \cite{Calinescu:2011ju} which is the best that can be done using a polynomial number of queries~\cite{Nemhauser:1978dm} (if $f$ is given as a value oracle) or assuming $\P \neq \NP$~\cite{Feige1998} (if $f$ is given explicitly). For more general constraints, Lee, Sviridenko and Vondr\'ak obtained a $p+\e$ approximation algorithm under $p$-matroid intersection constraint \cite{Lee:2010}. Feldman et al.~\cite{Feldman2011} obtained the same approximation ratio for the general class of $p$-exchange systems. For general $p$-matchoid constraints, the best approximation ratio is $p+1$, which is attained by the standard greedy algorithm~\cite{FisherNemhauserWolsey}.

    Non-monotone objectives are less understood even under the simplest assumptions. The current best-known result for maximizing a submodular function under a matroid constraint is $ 2.598$ \cite{Buchbinder:2016}, which is far from the $2.093$ hardness result \cite{Gharan:2011}. Table~\ref{tab:1} gives the best known bounds for the constraints that we consider in the paper.

    Due to the large volume of data in modern applications, there has also been a line of research focused on developing \emph{fast} algorithms for submodular maximization \cite{Badanidiyuru:2013jc, Mirzasoleiman:2015}. However, all results we have discussed so far assume that the entire instance is available at any time, which may not be feasible for massive datasets. This has motivated the study of streaming submodular maximization algorithms with low memory requirements. Badaniyuru et al. \cite{Badanidiyuru:2014ib} achieved a $2 + \e $ approximation algorithm for maximizing a monotone submodular function under a cardinality constraint in the streaming setting. This was recently shown to be the best possible bound attainable in one pass with memory sublinear in the size of the instance~\cite{Feldman2020}. Chakrabarti and Kale~\cite{DBLP:journals/mp/ChakrabartiK15} gave a $4p$ approximation for $p$-matroid intersection constraint or $p$-uniform hypergraph matching. Later, Chekuri et al. \cite{DBLP:conf/icalp/ChekuriGQ15} generalized their argument to arbitrary $p$-matchoid constraints, and also gave a modified algorithm for handling non-monotone submodular objectives. A fast, randomized variant of the algorithm of~\cite{DBLP:journals/mp/ChakrabartiK15} was studied by Feldman, Karbasi and Kazemi \cite{Feldman2018DoLess}, who showed that it has the same approximation guarantee when $f$ is monotone and achieves a $2p + 2\sqrt{p(p+1)} + 1 = 4p + 2 - o(1)$ approximation for general submodular function.
    Related to our work, there is an active research direction focusing on streaming (sub)modular maximization subject to matching constraints. For submodular maximization, the best approximation is $3 + 2 \sqrt{2}$ and $4 + 2\sqrt{3}$ for monotone and non-montone functions respectively \cite{Levin:2020:Streaming}.

    When multiple passes through the stream are allowed, less is known and the tradeoff between the approximation guarantee and the number of passes requires more attention. Assuming cardinality constraints, one can obtain a $\tfrac{e}{e-1} + \e$ multipass streaming algorithm in $\bigO{1/\e}$-passes (see \cite{Badanidiyuru:2013jc,HuangKakimuraMultiPass2018,mcGregor2017,mirzasoleiman2016fast,Ashkan2018}).
    Huang et al. \cite{HuangKakimuraMultiPass2018} achieved a $2+  \e$ approximation under a knapsack constraint in $\bigO{1/\e}$ passes. For the intersection of $p$-partition matroids or rank $p$-hypergraph matching, the number of passes becomes dependent on $p$. Chakrabarti and Kale~\cite{DBLP:journals/mp/ChakrabartiK15}\footnote{In  \cite{DBLP:journals/mp/ChakrabartiK15} a bound of $O(\log p/\e^3)$ is stated. We note that there appears to be a small oversight in their analysis, arising from the fact that their convergence parameter $\kappa$ in this case is $O(\e^3/p^4)$. In any case, it seems reasonable to assume that $p$ is a small constant in most cases.} showed that if one allows  $\bigO{p^4\log(p)/\e^3}$-passes, a $p+1+ \e$ approximation is possible. Here we show how to obtain the same guarantee for an arbitrary $p$-matchoid constraint, while reducing the number of passes to $O(p/\e)$.

    \section{The main multi-pass streaming algorithm}
    \label{sec:multi-pass-main}
    For monotone functions, our main multi-pass algorithm is given by the procedure \mSLS{} in Algorithm~\ref{alg:multi-pass}. We suppose that we are given a submodular function $f : 2^X \to \posreals$ and a $p$-matchoid constraint $\cM^p = (\cI^p,X)$ on $X$ given as a collection of matroids $\{\cM_i = (X_i,\cI_i)\}$. Our procedure runs for $d$ passes, each of which uses a modification of the algorithm of Chekuri, Gupta, and Quanrud~\cite{DBLP:conf/icalp/ChekuriGQ15}, given as the procedure \SLS. In each pass, procedure \SLS{} maintains a current solution $S$, which is initially set to some $\Sinit$.  Whenever an element $x \in \Sinit$ arrives again in the subsequent stream, the procedure simply discards $x$. For all other elements $x$, the procedure invokes a helper procedure \textsc{Exchange}, given formally in Algorithm~\ref{alg:exchange}, to find an appropriate set $C_x \subseteq S$ of up to $p$ elements so that $S \bb C_x + x \in \cI$. It then exchanges $x$ with $C_x$ if it gives a significantly improved solution.
    \begin{algorithm}[t]
    \SetKwProg{myproc}{procedure}{}{}
    \myproc{$\textsc{MultipassLocalSearch}(\alpha,\beta_1,\ldots,\beta_d)$}{
    $S_0 \gets \emptyset$\;
    \For{$i = 1$ \KwTo $d$}{
      Let $\tilde{S}$ be the output of $\textsc{StreamingLocalSearch}(\alpha,\beta_i,S_{i-1})$\;
      $S_{i} \gets \tilde{S}$\;
    }
    \Return $S_{d}$\;
    }
    \BlankLine
    \myproc{$\textsc{StreamingLocalSearch}(\alpha,\beta,\Sinit)$}{
    $S \gets \Sinit$\;
    \ForEach{$x$ in the stream}
    {
    \lIf{$x \in \Sinit$}{discard $x$}
    $C_x \gets \textsc{Exchange}(x,S)$\;
    \If{$f(x | S) \geq \alpha + (1 + \beta)\sum_{c \in C_x}\nu(c,S)$}{
        $S \gets S \bb C_x + x$\;
      }
    }
    \Return $S$\;
    }
    \caption{The multi-pass streaming local search algorithm}
    \label{alg:multi-pass}
    \end{algorithm}
    \begin{algorithm}[t]
    \caption{The procedure $\textsc{Exchange}(x,S)$}
    \label{alg:exchange}
    \SetKwProg{myproc}{procedure}{}{}
    \myproc{$\textsc{Exchange}(x,S)$}{
    $C_x \gets \emptyset$\;
    \ForEach{$\cM_\ell=(X_\ell,\cI_\ell)$ with $x \in X_\ell$}
    {
      $S_\ell \gets S \cap X_\ell$\;
      \If{$S_\ell + x \not\in \cI$}{
        $T_\ell \gets \{y\in S_\ell: S_\ell - y + x \in \cI_\ell\}$\;
        $C_x \gets C_x + \argmin_{t \in T_\ell}\nu(t,S)$\;
      }
    }
    \Return $C_x$\;
    }
    \end{algorithm}
    The improvement is measured with respect to a set of auxiliary weights $\nu(x,S)$ maintained by the algorithm.  For $u,v \in X$, let $u \prec v$ denote that ``element $u$ arrives before $v$'' in the stream.
    Then, we define the \emph{incremental value} of an element $e$ with respect to a set $T$ as
    \[
    \nu(e,T) = f(e \mid \{t' \in T:t' \prec e\})\,.
    \]
    There is a slight difficulty here in that we must also define incremental values for the elements of $\Sinit$. To handle this difficulty, we in fact define $\prec$ with respect to a \emph{pretend} stream ordering. Note that in all invocations of the procedure \SLS{} made by \mSLS, the set $\Sinit$ is either $\emptyset$ or the result of a previous application of \SLS. In our pretend ordering ($\prec$) all of $\Sinit$ first arrives in the same relative pretend ordering as the previous pass, followed by all of $X \bb \Sinit$ in the same order given by the stream $X$. We then define our incremental values with respect to this pretend stream ordering.

    Using these incremental values, \SLS{} proceeds as follows. When an element $x \not\in \Sinit$ arrives, \SLS{} computes a set of elements $C_x \subseteq S$ that can be exchanged for $x$.
    \SLS{} replaces $C_x$ with $x$ if and only if the marginal value $f(x\mid S)$ with respect to $S$ is at least $(1+\beta)$ times larger than the sum of the current incremental values $\nu(c,S)$ of all elements $c \in C_x$ plus some threshold $\alpha$, where $\alpha, \beta > 0$ are given as parameters. In this case, we say that the element $x$ is \emph{accepted}. Otherwise, we say that $x$ is \emph{rejected}.  An element $x \in S$ that has been accepted may later be removed from $S$ if $x \in C_y$ for some later element $y$ that arrives in the stream. In this case we say that $x$ is \emph{evicted}.

    The approximation ratio obtained by one pass of \SLS{} depends on the parameter $\beta$ in two ways, which can be intuitively understood in terms of the standard analysis of the offline local search algorithm for the problem. Intuitively, if $\beta$ is chosen to be too large, more valuable elements will be rejected upon arrival and so, in the offline setting, our solution would be only approximately locally optimal, leading to a deterioration of the guarantee by a factor of $(1+\beta)$. However, in the streaming setting, the algorithm only attempts to exchange an element upon its arrival, and so the final solution will not necessarily be even $(1+\beta)$-approximately locally optimal---an element $x$ may be rejected because $f(x \mid S)$ is small when it arrives, but the processing of later elements in the stream can evict some elements of $S$. After these evictions, we could have $f(x \mid S)$ larger. The key observation in the analyses of~\cite{DBLP:journals/mp/ChakrabartiK15,DBLP:conf/icalp/ChekuriGQ15} is that the total value of these evicted elements---and so also the total increase in the marginal value of all rejected elements---can be bounded by $O(\frac{1}{\beta})$ times the final value of $f(S)$ at the end of the algorithm.
    Intuitively, if $\beta$ is chosen to be too small, the algorithm will make more exchanges, evicting more elements, which may result in rejected elements being much more valuable with respect to the final solution. Selecting the optimal value of $\beta$ thus requires balancing these two effects.

    Here, we observe that this second effect depends only on the total value of those elements that were accepted \emph{after} an element arrives. To use this observation, we measure the ratio $\delta = f(\Sinit)/f(\tilde{S})$ between the value of the initial solution $\Sinit$ of some pass of \SLS{} and the final solution $\tilde{S}$ produced by this pass. If $\delta$ is relatively small---and so one pass makes a lot of progress---then this pass gives us an improvement of $\delta^{-1}$ over the ratio already guaranteed by the previous pass since $f(\tilde{S}) = \delta^{-1}f(\Sinit)$.
    On the other hand, if $\delta$ is relatively large---and so one pass does not make much progress---then the total increase in the value of our rejected elements can be bounded by $\frac{1-\delta}{\beta}f(\tilde{S})$, and so the potential loss due to only testing these elements at arrival is relatively small. Balancing these two effects allows us to set $\beta$ smaller in each subsequent passes and obtain an improved guarantee.

    We now turn to the analysis of our algorithm. Here we focus on a single pass of \SLS{}. For $T,U \subseteq X$ we let $f(T \mid U) \triangleq f(T \cup U) - f(U)$. Throughout, we use $S$ to denote the current solution maintained by this pass (initially, $S = \Sinit$). The following key properties of incremental values will be useful in our analysis. We defer the proof to the Appendix.
    \begin{restatable}{lemma}{incremental}
    \label{lem:incremental-properties}
    For any $T \subseteq U \subseteq X$,
    \begin{enumerate}
    \item $\sum_{e \in T}\nu(e,T) = f(T) - f(\emptyset)$.
    \item $\nu(e,U) \leq \nu(e,T)$ for all $e \in T$.
    \item $f(T \mid U \bb T) \leq \sum_{t \in T}\nu(t,U)$.
    \item At all times during the execution of \SLS{}, $\nu(e,S) \geq \alpha$ for all $e \in S$.
    \end{enumerate}
    \end{restatable}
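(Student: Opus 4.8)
The plan is to fix the pretend stream ordering $\prec$ once and for all, and then reduce each of the first three parts to a telescoping computation or a single application of the diminishing-returns characterization of submodularity. Only the last part refers to the dynamics of \SLS{}, and I expect it to be the real work, since it must be maintained as an invariant rather than established in one shot.

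For part 1, I would enumerate $T = \{e_1 \prec e_2 \prec \cdots \prec e_m\}$ according to $\prec$, so that $\{t' \in T : t' \prec e_i\} = \{e_1,\ldots,e_{i-1}\}$ and hence $\nu(e_i,T) = f\lb\{e_1,\ldots,e_i\}\rb - f\lb\{e_1,\ldots,e_{i-1}\}\rb$. Summing over $i$ telescopes to $f(T) - f(\emptyset)$. For part 2, fix $e \in T$; since $T \subseteq U$, the prefix $\{t' \in T : t' \prec e\}$ is contained in $\{t' \in U : t' \prec e\}$, so applying the diminishing-returns inequality $f(e \mid B) \le f(e \mid A)$ with $A \subseteq B$ being these two prefixes yields $\nu(e,U) \le \nu(e,T)$.

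Part 3 combines the two ideas. I would order $T = \{t_1 \prec \cdots \prec t_m\}$ and write $f(T \mid U \bb T) = f(U) - f(U\bb T)$ as the telescoping sum $\sum_{i=1}^m f\lb t_i \mid (U\bb T) \cup \{t_1,\ldots,t_{i-1}\}\rb$. The key observation is that the conditioning set $(U\bb T) \cup \{t_1,\ldots,t_{i-1}\}$ contains the prefix $\{u \in U : u \prec t_i\}$, because any element of $U$ preceding $t_i$ lies either in $U\bb T$ or among $t_1,\ldots,t_{i-1}$. Diminishing returns then bounds each summand by $f\lb t_i \mid \{u \in U : u \prec t_i\}\rb = \nu(t_i,U)$, and summing gives the claim.

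Part 4 is the main obstacle, as it must be carried as an invariant through the execution. I would induct over the sequence of accept/evict operations. For the base case, if $\Sinit = \emptyset$ the claim is vacuous; otherwise $\Sinit$ is the output $\tilde S$ of the previous pass, and since the pretend ordering places $\Sinit$ first in exactly its previous relative order, $\nu(e,\Sinit)$ at the start equals $\nu(e,\tilde S)$ at the end of that pass, which is $\ge \alpha$ by the inductive hypothesis applied to the previous pass. For the step, suppose $x \notin \Sinit$ is accepted and set $S' = S \bb C_x + x$. Because $x$ arrives after every element currently in $S$, we have $\{t' \in S' : t' \prec x\} = S \bb C_x$, so $\nu(x,S') = f(x \mid S \bb C_x) \ge f(x \mid S)$ by submodularity; the acceptance test $f(x\mid S) \ge \alpha + (1+\beta)\sum_{c \in C_x}\nu(c,S)$ together with $\nu(c,S) \ge \alpha > 0$ (the inductive hypothesis, valid since each $c \in C_x \subseteq S$) gives $\nu(x,S') \ge \alpha$. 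Finally, for a surviving $e \in S\bb C_x$, removing $C_x$ only shrinks its prefix, $\{t' \in S' : t' \prec e\} = \{t' \in S : t' \prec e\}\bb C_x$, so diminishing returns yields $\nu(e,S') \ge \nu(e,S) \ge \alpha$; evicted elements leave $S$ and need not be checked. Hence the invariant is preserved after every operation.
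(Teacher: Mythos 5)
Your proof is correct and follows essentially the same route as the paper: a telescoping sum for (1), diminishing returns on nested prefixes for (2), telescoping plus the prefix-containment observation for (3), and induction over acceptance events for (4), including the same chain $\nu(x,S')\geq f(x\mid S\bb C_x)\geq f(x\mid S)\geq\alpha$ and the shrinking-prefix argument for surviving elements. The only cosmetic differences are that you seed invariant (4) by inducting across passes, whereas the paper proves it for an execution started from $S=\emptyset$ and handles a non-empty $\Sinit$ separately via the pretend-stream equivalence, and that your parenthetical ``$\nu(c,S)\geq\alpha>0$'' should read $\nu(c,S)\geq\alpha\geq 0$, since the monotone analysis sets $\alpha=0$.
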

    Let $A$ denote the set of elements \emph{accepted} during the present pass. These are the elements which were present in the solution $S$ at some previous time during the execution of this pass. Initially we have $A = S = \Sinit$ and whenever an element is added to $S$, during this pass we also add this element to $A$. Let $\tilde{A}$ and $\tilde{S}$ denote the sets of elements $A$ and $S$ at the end of this pass. Note that we regard all elements of $\Sinit$ as having been accepted at the start of the pass. The following lemma follows from the analysis of Chekuri, Gupta, and Quanrud~\cite{DBLP:conf/icalp/ChekuriGQ15} in the single-pass setting. We give a complete, self-contained proof in Appendix~\ref{sec:proof-lemma-refl}. Each element $e \in \tilde{A} \bb \tilde{S}$ was accepted but later \emph{evicted} by the algorithm. For any such evicted element, we let $\chi(e)$ denote the value of $\nu(e,S)$ at the moment that $e$ was removed from $S$.
    \begin{restatable}{lemma}{onepass}
    \label{lem:opt-R}
    Let $f :  2^X \to \posreals$ be a submodular function. Suppose $\tilde{S}$ is the solution produced at the end of one pass of \SLS{} and $\tilde{A}$ be the set of all elements accepted during this pass. Then,
    \[
      f(\opt \cup \tilde{A}) \leq (p + \beta p - \beta)\sum_{e \in \tilde{A}\bb\tilde{S}}\chi(e) + (p + \beta p + 1)f(\tilde{S}) + k\alpha\,.
    \]
    \end{restatable}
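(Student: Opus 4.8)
The plan is to bound $f(\opt \cup \tilde{A})$ by separately accounting for (i) the value of all accepted elements, $f(\tilde{A})$, and (ii) the residual marginal of the optimal elements that were never accepted. Writing $O_2 \triangleq \opt \bb \tilde{A}$ and noting that $\opt \cap \tilde{A} \subseteq \tilde{A}$, so that $O_2 \cup \tilde{A} = \opt \cup \tilde{A}$, submodularity gives
\[
f(\opt \cup \tilde{A}) = f(\tilde{A}) + f(O_2 \mid \tilde{A}) \le f(\tilde{A}) + \sum_{o \in O_2} f(o \mid \tilde{A}).
\]
I would bound the two pieces independently and then recombine, attaching the weight $(1+\beta)$ that comes from the acceptance rule to the second piece.

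For the first piece I would use Lemma~\ref{lem:incremental-properties}. By property~1, $f(\tilde{A}) - f(\emptyset) = \sum_{e \in \tilde{A}} \nu(e, \tilde{A})$, and I split this sum over the survivors $\tilde{S}$ and the evicted elements $\tilde{A} \bb \tilde{S}$. Both the final solution $\tilde{S}$ and the solution $S$ at the instant any element $e$ is evicted are subsets of $\tilde{A}$; hence property~2 gives $\nu(e, \tilde{A}) \le \nu(e, \tilde{S})$ for survivors and $\nu(e, \tilde{A}) \le \chi(e)$ for evicted elements. Summing, and invoking property~1 once more on the survivors, yields the clean bound $f(\tilde{A}) \le f(\tilde{S}) + \sum_{e \in \tilde{A} \bb \tilde{S}} \chi(e)$.

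For the second piece, each $o \in O_2$ is never in $\Sinit$ and was rejected on arrival. Letting $S_o$ be the current solution when $o$ arrives, we have $S_o \subseteq \tilde{A}$, so submodularity gives $f(o \mid \tilde{A}) \le f(o \mid S_o)$, and the failed acceptance test gives $f(o \mid S_o) < \alpha + (1+\beta)\sum_{c \in C_o} \nu(c, S_o)$. Since $O_2 \subseteq \opt$ is independent, $|O_2| \le k$, so the thresholds contribute at most $k\alpha$; it remains to control $\sum_{o \in O_2} \sum_{c \in C_o} \nu(c, S_o)$. Here I would first record a \emph{monotonicity in time} of incremental values: while $c$ stays in $S$, the set of elements of $S$ preceding $c$ can only shrink (newly accepted elements arrive later in the pretend order, and elements of $\Sinit$ never reappear), so by submodularity $\nu(c, S)$ is nondecreasing over the execution. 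Consequently $\nu(c, S_o) \le \nu(c, \tilde{S})$ if $c$ survives and $\nu(c, S_o) \le \chi(c)$ if $c$ is later evicted.

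The crux, and the step I expect to be the main obstacle, is the charging argument bounding how many optimal elements can point at a given $c$. I would argue per matroid: for each $\cM_\ell$ with $c \in X_\ell$, the independence of $\opt \cap X_\ell$ together with the fundamental-circuit structure of \textsc{Exchange} (where $C_o$ records the minimum-$\nu$ element of $T_\ell$) should yield an injective exchange map charging $c$ at most once through $\cM_\ell$. As $c$ lies in at most $p$ matroids, a surviving $c$ is charged at most $p$ times, while an evicted $c$ has had one of its matroid ``slots'' already consumed by the exchange that removed it, leaving at most $p-1$ charges. Combined with the time-monotonicity bound on each individual charge, this gives $\sum_{o \in O_2}\sum_{c \in C_o}\nu(c,S_o) \le p\,\bigl(f(\tilde{S}) - f(\emptyset)\bigr) + (p-1)\sum_{e \in \tilde{A}\bb\tilde{S}}\chi(e)$. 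Inserting this into the second piece with weight $(1+\beta)$, adding the first piece, dropping the nonpositive $-\,p(1+\beta)f(\emptyset)$ term, and collecting coefficients ($1 + (1+\beta)p = p+\beta p+1$ on $f(\tilde{S})$ and $1 + (1+\beta)(p-1) = p+\beta p-\beta$ on the evicted terms) produces exactly the claimed inequality. The delicate point throughout is that the exchange sets are formed at different times against an evolving solution, so making the per-matroid injection and the ``one slot consumed by eviction'' accounting rigorous, while honestly relating the arrival-time quantities $\nu(c,S_o)$ to the final values $\nu(c,\tilde{S})$ and $\chi(c)$, is where the real work lies.
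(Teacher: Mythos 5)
Your proposal has the right skeleton, and it is the same skeleton as the paper's proof: decompose $f(\opt \cup \tilde{A})$ into $f(\tilde{A})$ plus the marginals of the never-accepted optimal elements $O_2 = \opt \bb \tilde{A}$; bound $f(\tilde{A}) \leq f(\tilde{S}) + \sum_{e \in \tilde{A}\bb\tilde{S}}\chi(e)$ via Lemma~\ref{lem:incremental-properties}; use the failed acceptance test and $|O_2| \leq k$ to produce the $(1+\beta)$ factor and the $k\alpha$ term; and collect coefficients at the end (your arithmetic there is exactly right). Your ``monotonicity in time'' observation --- that $\nu(c,S)$ can only increase while $c$ stays in $S$, since later accepted elements follow $c$ in the pretend order --- is also correct and is a genuine ingredient of the real argument.

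The gap is the charging step, which is precisely the piece the paper does not prove from scratch but imports as a black box: Lemma~\ref{lem:cgq} (Lemma 9 of Chekuri, Gupta, and Quanrud). Your proposed mechanism --- charge each rejected $o$ to the elements of \emph{its own} exchange set $C_o$, push each $\nu(c,S_o)$ up to $\nu(c,\tilde{S})$ or $\chi(c)$ by time-monotonicity, and then claim a per-matroid injection so that a survivor is hit at most $p$ times and an evicted element at most $p-1$ times --- fails as stated, because the assignment $o \mapsto C_o$ is not injective per matroid, and cannot be repaired while the charges stay on the elements of $C_o$ themselves. Concretely, take a single rank-$2$ uniform matroid ($p=1$), current solution $\{a,b\}$ with $\nu(a,S) < \nu(b,S)$, and two rejected optimal elements: \textsc{Exchange} returns $C_{o_1} = C_{o_2} = \{a\}$, so $a$ is charged twice although $p = 1$. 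The inequality is still true, but only because one of the charges can be \emph{redistributed} to $b$, justified by the $\argmin$ choice in \textsc{Exchange} (which gives $\nu(a,S_o) \leq \nu(b,S_o)$), not by time-monotonicity of $a$. In general matroids this redistribution must additionally survive evictions, which requires matroid exchange bijections maintained over the whole execution; that is exactly the nontrivial mapping $\varphi$ of Lemma~\ref{lem:cgq}, whose sets $\varphi(o)$ are in general \emph{not} the sets $C_o$, and whose third property charges values of different elements than those actually displaced by $o$. Since you explicitly defer this construction (``where the real work lies'') and the literal reading of your sketch is false, the proof is incomplete at its crux; had you instead invoked the Chekuri--Gupta--Quanrud charging lemma at that point, the rest of your argument would go through and would coincide with the paper's proof of Lemma~\ref{lem:opt-R}.
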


    We now derive a bound for the  summation $\sum_{e \in \tilde{A} \bb \tilde{S}}\chi(e)$ (representing the value of \emph{evicted} elements) in terms of the total gain $f(\tilde{S}) - f(\Sinit)$ made by the pass, and also bound the total number of accepted elements in terms of $f(\opt)$.
    \begin{lemma}
    \label{lem:exit-values}
    Let $f : 2^X \to \posreals$ be a submodular function. Suppose that $\tilde{S}$ is the solution produced at the end of one pass of \SLS{} and $\tilde{A}$ is the set of all elements accepted during this pass. Then, $|\tilde{A}| \leq f(\opt)/\alpha$ and
    \[
    \sum_{e \in \tilde{A}\bb \tilde{S}}\chi(e) \leq \frac{1}{\beta}\left(f(\tilde{S}) - f(\Sinit)\right)\,.
    \]
    \end{lemma}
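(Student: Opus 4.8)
The plan is to derive both bounds from a single inequality tracking how much each accepted element raises the value of the current solution. Concretely, I would show that whenever \SLS{} accepts an element $x$ and updates $S$ to $S' = S \bb C_x + x$, the value increases by
\[
f(S') - f(S) \;\geq\; \alpha + \beta\sum_{c \in C_x}\nu(c,S)\,.
\]
Since $\nu(c,S) \geq \alpha > 0$ for every $c \in S$ by Lemma~\ref{lem:incremental-properties}(4), this shows in particular that each acceptance raises $f(S)$ by at least $\alpha$, which will drive the cardinality bound, while the term $\beta\sum_{c}\nu(c,S)$ records exactly the value of the elements evicted at this step and will drive the eviction bound.

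To establish the inequality, I would first write, with $B \triangleq S \bb C_x$,
\[
f(S') - f(S) = f(x \mid B) - f(C_x \mid B)\,,
\]
which is immediate from $S = B \cup C_x$, $S' = B + x$, and the definition of marginals. Submodularity gives $f(x \mid B) \geq f(x \mid S)$ since $B \subseteq S$, and Lemma~\ref{lem:incremental-properties}(3) applied with $T = C_x \subseteq U = S$ gives $f(C_x \mid B) \leq \sum_{c \in C_x}\nu(c,S)$. Combining these with the acceptance rule $f(x \mid S) \geq \alpha + (1+\beta)\sum_{c \in C_x}\nu(c,S)$ and cancelling one copy of $\sum_c \nu(c,S)$ yields the displayed inequality.

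For the eviction bound I would sum this inequality over all acceptances made during the pass. On the left the terms telescope to $f(\tilde S) - f(\Sinit)$. On the right, the key point is that for each $c \in C_x$ the quantity $\nu(c,S)$ at this moment is precisely $\chi(c)$, the incremental value recorded when $c$ is evicted, and that each evicted element lies in exactly one such $C_x$. Hence the double sum of $\beta\sum_{c}\nu(c,S)$ equals $\beta\sum_{e \in \tilde A \bb \tilde S}\chi(e)$, and dropping the nonnegative $\alpha$-terms gives $\sum_{e \in \tilde A\bb\tilde S}\chi(e) \leq \tfrac{1}{\beta}(f(\tilde S) - f(\Sinit))$. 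For the cardinality bound, summing only the $\alpha$ part gives $f(\tilde S) - f(\Sinit) \geq \alpha(|\tilde A| - |\Sinit|)$, since there are exactly $|\tilde A| - |\Sinit|$ acceptances during the pass. I would then use $f(\Sinit) \geq f(\Sinit) - f(\emptyset) = \sum_{e \in \Sinit}\nu(e,\Sinit) \geq \alpha|\Sinit|$, which follows from Lemma~\ref{lem:incremental-properties}(1) and (4) together with $f(\emptyset)\geq 0$, to conclude $f(\tilde S) \geq \alpha|\tilde A|$; finally $f(\tilde S) \leq f(\opt)$ by feasibility of $\tilde S$, giving $|\tilde A| \leq f(\opt)/\alpha$.

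The main obstacle is the per-acceptance inequality, and in particular the correct bookkeeping of the evicted elements: one must be careful that Lemma~\ref{lem:incremental-properties}(3) is invoked with $U = S$ equal to the solution at the exact moment of acceptance, so that $\nu(c,S)$ coincides with the recorded value $\chi(c)$, and that every element of $\tilde A \bb \tilde S$ is evicted exactly once and hence counted exactly once across all the sets $C_x$. Everything else is routine telescoping and uses only submodularity, so that no monotonicity assumption is needed.
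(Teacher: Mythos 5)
Your proof is correct and takes essentially the same route as the paper: the identical per-acceptance chain (submodularity, Lemma~\ref{lem:incremental-properties}(3), and the acceptance rule) yielding $f(S') - f(S) \geq \alpha + \beta\sum_{c \in C_x}\chi(c)$, followed by summing/telescoping over the pass, which is exactly the paper's potential-function argument for $\Phi(A) = \sum_{e \in A \bb S}\chi(e)$. If anything, your treatment of the cardinality bound is slightly more careful than the paper's, which asserts $f(\tilde{S}) \geq \alpha|\tilde{A}|$ without spelling out the accounting for the elements of $\Sinit$ (regarded as accepted at the start of the pass); your step $f(\Sinit) \geq \sum_{e \in \Sinit}\nu(e,\Sinit) \geq \alpha|\Sinit|$ via Lemma~\ref{lem:incremental-properties}(1) and (4) supplies exactly the missing detail.
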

    \begin{proof}
    We consider the quantity $\Phi(A) \triangleq \sum_{e \in A \bb S}\chi(e)$. Suppose some element $a$ with $C_a \neq \emptyset$ is added to $S$ by the algorithm, evicting the elements of $C_a$. Then (as each element can be evicted only once) $\Phi(A)$ increases by precisely $\Delta \triangleq \sum_{e \in C_a}\chi(e)$. Let $S^-_a,S^+_a$ and $A^-_a,A^+_a$ be the sets $S$ and $A$, respectively, immediately before and after $a$ is accepted. Let $\delta_a: = f(S_a^+) - f(S_a^-)$ be the change in the objective function after the exchange between $a$ and $C_a$. Since $a$ is accepted, we must have $f(a \mid S_a^-) \geq \alpha + (1+\beta)\sum_{e \in C_a}\nu(e,S_a^-)$. Then,
    \begin{align*}
    \delta_a
    &= f(S_a^- \bb C_a + a) - f(S_a^-), \\
    &= f(a \mid S_a^- \bb C_a) - f(C_a \mid S_a^- \bb C_a), & \\
    & \geq f(a \mid S_a^-) - f(C_a \mid S_a^- \bb C_a), & \text{(by submodularity)}\\
    & \geq f(a \mid S_a^-) - \sum_{e \in C_a} \nu(e, S_a^-),&  \text{(by Lemma~\ref{lem:incremental-properties}~(3))}\\
    & \geq \alpha + (1+\beta)\sum_{e \in C_a} \nu(e,S_a^-) - \sum_{e \in C_a} \nu(e,S_a^-), & \text{(since $a$ is accepted)}\\
    & = \alpha + \beta\sum_{e \in C_a} \chi(e)\,\text{(by definition of $\chi(e))$} \\
    &= \alpha + \beta\Delta.
    \end{align*}
    It follows that whenever $\Phi(A)$ increases by $\Delta$, $f(S)$ must increase by at least $\beta\Delta$.
    Initially, $\Phi(A) = 0$ and $f(S) = f(\Sinit)$ and at the end of the algorithm, $\Phi(A) = \sum_{e \in \tilde{A} \bb \tilde{S}}\chi(e)$ and $f(S) = f(\tilde{S})$. Thus, $\beta\sum_{e \in \tilde{A}\bb\tilde{S}}\chi(e) \leq [f(\tilde{S}) - f(\Sinit)]$.

    It remains to show that $|\tilde{A}| \leq f(\opt)/\alpha$. For this, we note that the above chain of inequalities also implies that every time an element is accepted (and so $|A|$ increases by one), $f(S)$ also increases by at least $\alpha$. Thus, we have $f(\opt) \geq f(\tilde{S}) \geq \alpha|\tilde{A}|$.
    \end{proof}
    Using Lemma~\ref{lem:exit-values} to bound the sum of exit values in Lemma~\ref{lem:opt-R} then immediately gives us the following guarantee for each pass performed in \mSLS{}. In the $i^{\textrm{th}}$ such pass, we will have $\Sinit = S_{i-1}$, $\tilde{S} = S_{i}$, and $\beta = \beta_i$. We let $A_i$ denote the set of $\tilde{A}$ of all elements accepted during this particular pass.
    \begin{lemma}
    \label{lem:single-pass}
    Let $f : 2^X \to \posreals$ be a submodular function. Consider the $i^\textrm{th}$ pass of \SLS{} performed by \mSLS, and let $A_i$ be the set of all elements accepted during this pass. Then, $|A_i| \leq f(\opt)/\alpha$ and
    \[
    f(\opt \cup A_i) \leq \left(p/\beta_i + p - 1\right)[f(S_i) - f(S_{i-1})] + (p  +  p \beta_i+ 1)f(S_{i}) + k\alpha\,.
    \]
    \end{lemma}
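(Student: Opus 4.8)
The plan is to derive Lemma~\ref{lem:single-pass} by combining the two preceding lemmas, specialized to the $i^{\textrm{th}}$ pass of \SLS{} invoked by \mSLS{}. The first step is pure bookkeeping: in this pass the generic single-pass quantities instantiate as $\Sinit = S_{i-1}$, $\tilde{S} = S_i$, $\beta = \beta_i$, and $\tilde{A} = A_i$. With this dictionary, the cardinality bound $|A_i| \leq f(\opt)/\alpha$ is nothing but the first conclusion of Lemma~\ref{lem:exit-values}, so the only real task is the displayed inequality.

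For that inequality, I would begin from Lemma~\ref{lem:opt-R}, which under the above substitution states
\[
f(\opt \cup A_i) \leq (p + \beta_i p - \beta_i)\sum_{e \in A_i \bb S_i}\chi(e) + (p + \beta_i p + 1)f(S_i) + k\alpha\,,
\]
and then plug in the upper bound on the exit values from the second conclusion of Lemma~\ref{lem:exit-values}, namely $\sum_{e \in A_i \bb S_i}\chi(e) \leq \beta_i^{-1}\left(f(S_i) - f(S_{i-1})\right)$. The coefficient on the resulting gain term collapses to $(p + \beta_i p - \beta_i)/\beta_i = p/\beta_i + p - 1$, while the remaining two terms are carried over verbatim, yielding exactly the claimed bound.

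The one place warranting attention is the direction of the inequality under this substitution: replacing the sum $\sum_{e \in A_i \bb S_i}\chi(e)$ by a larger quantity preserves the bound only because its coefficient $p + \beta_i p - \beta_i = p + \beta_i(p-1)$ is nonnegative, which holds since $p \geq 1$ and $\beta_i > 0$ force $p + \beta_i(p-1) \geq p \geq 1$. This sign check is the sole subtlety; once it is in place the lemma follows in a single line of algebra. As a sanity check one may also note that the exit values are nonnegative by part~4 of Lemma~\ref{lem:incremental-properties}, so the gain $f(S_i) - f(S_{i-1})$ is itself nonnegative, although this fact is not actually needed for the argument.
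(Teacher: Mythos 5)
Your proof is correct and follows exactly the paper's approach: the paper obtains this lemma by instantiating Lemma~\ref{lem:opt-R} with $\Sinit = S_{i-1}$, $\tilde{S} = S_i$, $\beta = \beta_i$, $\tilde{A} = A_i$ and substituting the bound on $\sum_{e \in A_i \bb S_i}\chi(e)$ from Lemma~\ref{lem:exit-values}, which is precisely your argument. Your explicit check that the coefficient $p + \beta_i(p-1)$ is nonnegative (so the substitution preserves the inequality's direction) is a detail the paper leaves implicit, and it is a welcome addition.
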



    \section{Analysis of the multipass algorithm for monotone functions.}
    \label{sec:select-param-beta_i}

    We now show how to use Lemma~\ref{lem:single-pass} together with a careful selection of parameters $\alpha$ and $\beta_1,\ldots,\beta_d$ to derive guarantees for the solution $f(S_i)$ produced after the $i^{\mathrm{th}}$ pass made in \mSLS. Here, we consider the case that $f$ is a \emph{monotone} function. In this case, we have $f(\opt) \geq f(\opt \cup A_i)$ for all $i$. We set $\alpha = 0$ in each pass. In the first pass, we will set $\beta_1 = 1$. Then, since $S_0 = \emptyset$ Lemma~\ref{lem:single-pass} immediately gives:
    \begin{equation}
    \label{eq:first-pass}
    f(\opt) \leq f(\opt \cup A_1) \leq (2p-1)\left[f(S_1) - f(\emptyset)\right] + (2p+1)f(S_1) = 4pf(S_1)\,.
    \end{equation}
    For passes $i > 1$, we use the following, which relates the approximation guarantee obtained in this pass to that from the previous pass.
    \begin{theorem}
    \label{thm:guarantee-main}
    For $i > 1$, suppose that $f(\opt) \leq \gamma_{i-1}\cdot f(S_{i-1})$ and define $\delta_i = \frac{f(S_{i-1})}{f(S_{i})}$ as the ratio between the two previous passes. Then,
    \[
    f(\opt) \leq \min\left\{\gamma_{i-1}\delta_i, (\tfrac{p}{\beta_i} + p - 1)(1-\delta_i) + p + \beta_i p + 1\right\}\cdot f(S_{i}) + k \alpha\,.
    \]
    \end{theorem}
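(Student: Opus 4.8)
The plan is to prove the two arguments of the minimum as separate upper bounds on $f(\opt)$ and then combine them. Since $k\alpha \geq 0$, it suffices to show that $f(\opt)$ is at most each of the two expressions multiplied by $f(S_i)$, plus $k\alpha$; the claimed inequality then follows immediately by retaining whichever coefficient of $f(S_i)$ is smaller and pulling the common $k\alpha$ term outside the minimum.

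First I would derive the bound $f(\opt) \leq \gamma_{i-1}\delta_i\, f(S_i) + k\alpha$. This follows directly from the inductive hypothesis $f(\opt) \leq \gamma_{i-1} f(S_{i-1})$: substituting $f(S_{i-1}) = \delta_i f(S_i)$ from the definition $\delta_i = f(S_{i-1})/f(S_i)$ gives $f(\opt) \leq \gamma_{i-1}\delta_i f(S_i)$, and adding the nonnegative quantity $k\alpha$ preserves the inequality. This handles the first term of the minimum.

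Next I would derive the bound involving $\beta_i$ from Lemma~\ref{lem:single-pass}. Because $f$ is monotone in this section, we have $f(\opt) \leq f(\opt \cup A_i)$, so Lemma~\ref{lem:single-pass} yields $f(\opt) \leq (p/\beta_i + p - 1)[f(S_i) - f(S_{i-1})] + (p + p\beta_i + 1) f(S_i) + k\alpha$. The only manipulation required is to rewrite the gain made by the pass: using $f(S_{i-1}) = \delta_i f(S_i)$ once more gives $f(S_i) - f(S_{i-1}) = (1 - \delta_i) f(S_i)$, and factoring out $f(S_i)$ produces exactly $\left[(\tfrac{p}{\beta_i} + p - 1)(1-\delta_i) + p + \beta_i p + 1\right] f(S_i) + k\alpha$, which is the second term of the minimum.

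Taking the minimum of the two coefficients of $f(S_i)$ then gives the stated inequality. The argument is essentially routine algebra; the two substantive ingredients are the monotonicity step $f(\opt) \leq f(\opt \cup A_i)$ (which is precisely what restricts this theorem to monotone $f$) and the identity $f(S_i) - f(S_{i-1}) = (1-\delta_i)f(S_i)$ coming from the definition of $\delta_i$. I do not anticipate any real obstacle here; the only point to be careful about is keeping the $k\alpha$ term attached to \emph{both} bounds so that it can be factored out of the minimum at the end.
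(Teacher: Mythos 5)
Your proposal is correct and follows essentially the same argument as the paper: the first term of the minimum comes directly from the inductive hypothesis together with $f(S_{i-1}) = \delta_i f(S_i)$, and the second comes from applying Lemma~\ref{lem:single-pass} (via monotonicity, $f(\opt) \leq f(\opt \cup A_i)$) and rewriting the gain as $(1-\delta_i)f(S_i)$. Your explicit care in carrying the $k\alpha$ term through both bounds is a minor point of added rigor over the paper's presentation, but the substance is identical.
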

    \begin{proof}
    From the definition of $\gamma_{i-1}$ and $\delta_i$, we have:
    \begin{equation*}
    f(\opt) \leq \gamma_{i-1}f(S_{i-1}) = \gamma_{i-1}\delta_if(S_{i})\,.
    \end{equation*}
    On the other hand, $f(S_{i}) - f(S_{i-1}) = (1 - \delta_i)f(S_{i})$. Thus, Lemma~\ref{lem:single-pass} gives:
    \begin{equation*}
    f(\opt) \leq
    \left[\left(p/\beta_i + p - 1\right)(1 - \delta_i) + p + \beta_i p + 1\right]f(S_{i}) + k\alpha\,. \qedhere
    \end{equation*}
    \end{proof}
    Now, we observe that for any fixed guarantee $\gamma_{i-1}$ from the previous pass, $\gamma_{i-1}\delta_i$ is an increasing function of $\delta_i$ and $(p/\beta_i + p - 1)(1-\delta_i) + p + \beta_i p + 1$ is an decreasing function of $\delta_i$. Thus, the guarantee we obtain in Theorem~\ref{thm:guarantee-main} is always at least as good as that obtained when these two values are equal. Setting:
    \[
    \gamma_{i-1}\delta_i = (\tfrac{p}{\beta_i} + p - 1)(1-\delta_i) + p + \beta_i p + 1,
    \]
    and solving for $\delta_i$ gives us:
    \begin{equation}
    \label{eq:delta}
    \delta_i = \frac{p(1+\beta_i)^2}{p + \beta_i(\gamma_{i-1}-1+p)}\,.
    \end{equation}
    In the following analysis, we consider this value of $\delta_i$ since the guarantee given by Theorem~\ref{thm:guarantee-main} will always be no worse than that given by this value. The analysis for a single matroid constraint follows from our results for $p$-matchoids, but the analysis and parameter values obtained are much simpler, so we present it separately, first.
    \begin{theorem}
    \label{thm:monotone-matroid}
    Suppose we run Algorithm~\ref{alg:multi-pass} for an arbitrary matroid constraint and monotone submodular function $f$, with $\beta_i = \tfrac{1}{i}$. Then $2(1 + \tfrac{1}{i})f(S_i) \geq f(\opt)$ for all $i > 0$.
    In particular, after $i = \tfrac{2}{\e}$ passes, $(2+\e)f(S_i) \geq f(\opt)$.
    \end{theorem}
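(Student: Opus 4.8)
The plan is to prove $f(\opt) \leq \gamma_i\, f(S_i)$ with $\gamma_i = 2(1 + \tfrac{1}{i})$ by induction on $i$, specializing the general machinery to the matroid case $p = 1$ and using $\alpha = 0$ throughout, so that the $k\alpha$ term in Theorem~\ref{thm:guarantee-main} vanishes. I would carry the sequence $\gamma_i = 2(1+\tfrac{1}{i})$ itself as the inductive hypothesis. For the base case $i = 1$, I would invoke the first-pass bound \eqref{eq:first-pass}, which with $p = 1$ reads $f(\opt) \leq 4 f(S_1)$; since $2(1 + \tfrac{1}{1}) = 4$, this establishes $\gamma_1 = 4$ exactly, matching the claim.

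For the inductive step $i > 1$, assume $\gamma_{i-1} = 2(1 + \tfrac{1}{i-1}) = \tfrac{2i}{i-1}$. The crucial observation, already noted after Theorem~\ref{thm:guarantee-main}, is that the guarantee there is a minimum of an increasing and a decreasing function of $\delta_i$, so it is largest at the crossing point; substituting the equalizing value \eqref{eq:delta} therefore yields a valid upper bound $f(\opt) \leq \gamma_{i-1}\delta_i\, f(S_i)$ regardless of the actual progress ratio $\delta_i = f(S_{i-1})/f(S_i)$ realized by the pass. Plugging $p = 1$ and $\beta_i = \tfrac{1}{i}$ into \eqref{eq:delta} gives
\[
\delta_i = \frac{(1 + 1/i)^2}{1 + \gamma_{i-1}/i}\,.
\]
Using $\gamma_{i-1}/i = \tfrac{2}{i-1}$, so that $1 + \gamma_{i-1}/i = \tfrac{i+1}{i-1}$, together with $(1+\tfrac{1}{i})^2 = \tfrac{(i+1)^2}{i^2}$, I would simplify this to $\delta_i = \tfrac{(i+1)(i-1)}{i^2} = \tfrac{i^2 - 1}{i^2}$. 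Then
\[
\gamma_i = \gamma_{i-1}\delta_i = \frac{2i}{i-1}\cdot\frac{(i-1)(i+1)}{i^2} = \frac{2(i+1)}{i} = 2\Bigl(1 + \tfrac{1}{i}\Bigr),
\]
closing the induction.

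I expect the only subtle point—rather than a genuine obstacle—to be the justification of substituting the equalizing $\delta_i$: one must argue that although the realized ratio $f(S_{i-1})/f(S_i)$ is determined adversarially by how much progress the pass happens to make, the $\min$ in Theorem~\ref{thm:guarantee-main} is in every case bounded above by its value at the crossing point, so the derived guarantee $\gamma_i$ is valid unconditionally. The algebra itself is clean and poses no difficulty. The final claim then follows by setting $i = \tfrac{2}{\e}$, which gives $\gamma_i = 2(1 + \tfrac{\e}{2}) = 2 + \e$ and hence $(2+\e)f(S_i) \geq f(\opt)$.
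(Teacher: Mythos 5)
Your proof is correct and follows essentially the same route as the paper's: induction on $i$ with the base case from \eqref{eq:first-pass}, and an inductive step that bounds the minimum in Theorem~\ref{thm:guarantee-main} by its value at the crossing point \eqref{eq:delta}, with identical algebra yielding $\delta_i = \tfrac{(i-1)(i+1)}{i^2}$ and $\gamma_i = \tfrac{2(i+1)}{i}$. The only cosmetic difference is that you carry the fixed constant $2(1+\tfrac{1}{i})$ as the inductive hypothesis rather than an inequality on the realized guarantee, which if anything sidesteps the paper's implicit reliance on $\gamma_{i-1}\delta_i^*$ being increasing in $\gamma_{i-1}$.
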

    \begin{proof}
    Let $\gamma_i$ be the guarantee for our algorithm after $i$ passes. We show, by induction on $i$, that $\gamma_i \leq \frac{2(i+1)}{i}$. For $i = 1$, we have $\beta_1 = 1$ and so from \eqref{eq:first-pass} we have $\gamma_1 = 4$, as required. For $i > 1$, suppose that $\gamma_{i-1} \leq \frac{2i}{i-1}$. Since $p = 1$ and $\beta_i = 1/i$, identity \eqref{eq:delta} gives:
    \begin{equation*}
    \delta_i \leq \frac{(1+\frac{1}{i})^2}{1 + \frac{1}{i}(\frac{2i}{i-1})}
    = \frac{\frac{(i+1)^2}{i^2}}{\frac{(i-1)+2}{i-1}}
    = \frac{(i-1)(i+1)}{i^2}\,.
    \end{equation*}

    Thus, by Theorem~\ref{thm:guarantee-main}, the $i^{\mathrm{th}}$ pass of our algorithm has guarantee $\gamma_i$ satisfying:
    \[
    \gamma_{i} \leq \gamma_{i-1}\delta_i \leq \frac{2i}{i-1}\frac{(i-1)(i+1)}{i^2} = \frac{2(i+1)}{i}\,,
    \]
    as required.
    \end{proof}

    \begin{theorem}\label{thm:monotone-p-matroid}
    Suppose we run Algorithm~\ref{alg:multi-pass} for an arbitrary $p$-matchoid constraint and monotone submodular function $f$, $\beta_1 = 1$ and
    \[\beta_i =\frac{\gamma_{i-1} - 1 - p}{\gamma_{i-1} - 1 + p},\]
    for $i > 1$, where $\gamma_i$ is given by the recurrence $\gamma_1 = 4p$ and
    \[
    \gamma_{i} =4p\frac{\gamma_{i-1}(\gamma_{i-1} - 1)}{(\gamma_{i-1} - 1 + p)^2},
    \]
    for $i > 1$. Then $\lb p + 1 + \tfrac{4p}{i} \rb f(S_i) \geq f(\opt)$ for all $i > 0$.
    In particular, after $i = \tfrac{4p}{\e}$ passes, $(p+1+\e)f(S_i) \geq f(\opt)$.
    \end{theorem}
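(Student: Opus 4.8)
The plan is to treat the sequence $\gamma_i$ defined by the stated recurrence as the running approximation guarantee, and to establish two things separately: (i) that $f(\opt)\le\gamma_i f(S_i)$ holds for every $i>0$, and (ii) that $\gamma_i\le p+1+\tfrac{4p}{i}$. For (i) I would argue by induction on $i$, exactly in the spirit of the discussion preceding the theorem. The base case is \eqref{eq:first-pass}, giving $\gamma_1=4p$. For the inductive step, I would first check that the stated $\beta_i$ is precisely the minimizer over $\beta_i>0$ of the equalized value $\gamma_{i-1}\delta_i$, with $\delta_i$ given by \eqref{eq:delta}: writing $c=\gamma_{i-1}-1+p$ and differentiating $\tfrac{(1+\beta)^2}{p+\beta c}$, setting the derivative to zero yields $\beta c=c-2p$, i.e. $\beta_i=\tfrac{\gamma_{i-1}-1-p}{\gamma_{i-1}-1+p}$. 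Substituting this back (using $1+\beta_i=\tfrac{2(\gamma_{i-1}-1)}{c}$ and $p+\beta_i c=\gamma_{i-1}-1$) collapses $\gamma_{i-1}\delta_i$ to exactly the recurrence $\gamma_i=\tfrac{4p\gamma_{i-1}(\gamma_{i-1}-1)}{(\gamma_{i-1}-1+p)^2}$. Since Theorem~\ref{thm:guarantee-main} (with $\alpha=0$), together with the monotonicity observation following it, guarantees that the bound from the actual pass is never worse than this equalized value, we obtain $f(\opt)\le\gamma_i f(S_i)$.

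With (i) in hand the theorem reduces to the purely numerical claim that $\gamma_1=4p$ and the recurrence force $\gamma_i\le p+1+\tfrac{4p}{i}$. The clean route is the substitution $w_i:=\gamma_i-(p+1)$ and a telescoping bound $\tfrac{1}{w_i}\ge\tfrac{i}{4p}$, which is equivalent to $w_i\le\tfrac{4p}{i}$. I would prove this via the single-step inequality $\tfrac{1}{w_i}-\tfrac{1}{w_{i-1}}\ge\tfrac{1}{4p}$, then telescope from the base case $w_1=3p-1$, where $\tfrac{1}{w_1}=\tfrac{1}{3p-1}\ge\tfrac{1}{4p}$ since $p\ge1$.

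The heart of the argument, and the step I expect to carry the most computation, is this single-step inequality. Writing $u:=\gamma_{i-1}-1$, so that $w_{i-1}=u-p$, a direct expansion of the recurrence gives a numerator for $w_i$ that factors as
\[
w_i=\frac{(u-p)\ld(3p-1)u+p(p+1)\rd}{(u+p)^2}\,,
\]
which in particular shows that positivity $w_i>0$ is preserved by the induction (as $u>p$ and $p\ge1$), so that inverting is legitimate throughout. This factored form then lets me compute
\[
\frac{1}{w_i}-\frac{1}{w_{i-1}}=\frac{1}{u-p}\cdot\frac{(u+p)^2-(3p-1)u-p(p+1)}{(3p-1)u+p(p+1)}=\frac{u+1}{(3p-1)u+p(p+1)}\,,
\]
using the second key factorization $(u+p)^2-(3p-1)u-p(p+1)=(u-p)(u+1)$. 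The desired bound $\tfrac{u+1}{(3p-1)u+p(p+1)}\ge\tfrac1{4p}$ cross-multiplies to $4p(u+1)\ge(3p-1)u+p(p+1)$, i.e. to $(p+1)u+p(3-p)\ge0$, which holds for all $u>p\ge1$ since $(p+1)u>p(p+1)$ gives $(p+1)u+p(3-p)>4p>0$.

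Telescoping then yields $\tfrac{1}{w_i}\ge\tfrac{1}{w_1}+\tfrac{i-1}{4p}\ge\tfrac{i}{4p}$, hence $\gamma_i=p+1+w_i\le p+1+\tfrac{4p}{i}$ for all $i>0$, and choosing $i=\tfrac{4p}{\e}$ gives $\gamma_i\le p+1+\e$, i.e. $(p+1+\e)f(S_i)\ge f(\opt)$, as required. The only genuine subtlety to watch is the verification of the two factorizations: both factor through $(u-p)$ because $p+1$ is the unique nonzero fixed point of the recurrence (solving $x=\tfrac{4px(x-1)}{(x-1+p)^2}$ reduces to $(x-1-p)^2=0$), which forces $u=p$ to be a root of each numerator; confirming these and maintaining $w_i>0$ across the induction is where the care is needed, but the resulting algebra is routine.
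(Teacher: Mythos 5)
Your proof is correct. Its first half coincides with the paper's own argument: both invoke Theorem~\ref{thm:guarantee-main} with $\alpha=0$, use the observation that the guarantee is never worse than the equalized value of the two bounds, and check that the stated $\beta_i$ collapses $\gamma_{i-1}\delta_i$ from \eqref{eq:delta} into the recurrence (your verification that $\beta_i$ is the exact minimizer of the equalized value is extra but harmless; the paper simply substitutes it). The genuine difference is in the numerical claim $\gamma_i\le p+1+\tfrac{4p}{i}$. The paper argues by induction on the bound itself: it shows in Appendix~\ref{sec:calc-proof-theor} (via a derivative computation) that $\gamma_i$ is an increasing function of $\gamma_{i-1}$, substitutes the hypothesis $\gamma_{i-1}\le p+1+\tfrac{4p}{i-1}$ into the recurrence, and verifies the resulting inequality by expansion over a common denominator. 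You instead track the exact sequence via $w_i=\gamma_i-(p+1)$ and prove the per-step bound $\tfrac{1}{w_i}-\tfrac{1}{w_{i-1}}\ge\tfrac{1}{4p}$, telescoping from $w_1=3p-1$. I checked your algebra: with $u=\gamma_{i-1}-1$, both factorizations $w_i=(u-p)\bigl[(3p-1)u+p(p+1)\bigr]/(u+p)^2$ and $(u+p)^2-(3p-1)u-p(p+1)=(u-p)(u+1)$ hold, the cross-multiplied inequality reduces to $(p+1)u+p(3-p)>4p>0$ for $u>p\ge1$, and the positivity $w_i>0$ you maintain also guarantees $\beta_i>0$, which the algorithmic guarantee (via Lemma~\ref{lem:exit-values}) implicitly requires. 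Your route buys three things: it eliminates both appendix calculations, it explains structurally why the argument works ($p+1$ is the unique nonzero fixed point of the recurrence, which forces the factor $u-p$ in each numerator), and it yields the slightly sharper bound $\tfrac{1}{w_i}\ge\tfrac{1}{3p-1}+\tfrac{i-1}{4p}$. The paper's monotonicity-based induction is heavier algebraically, but has the mild advantage of only needing one-step control of the recurrence applied to an upper bound, rather than exact factorizations of the iterated map.
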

    \begin{proof}
    We first show that approximation guarantee of our algorithm after $i$ passes is given by $\gamma_i$. Setting $\beta_1 = 1$, we obtain $\gamma_1=4p$ from \eqref{eq:first-pass}, agreeing with our definition. For passes $i > 1$, let $\beta_i= \frac{\gamma_{i-1}-1-p}{\gamma_{i-1}-1+p}$. As in the case of matroid constraint, Theorem~\ref{thm:guarantee-main} implies that the guarantee for pass $i$ will be at most $\delta_i\gamma_{i-1}$, where $\delta_i$ is chosen to satisfy \eqref{eq:delta}. Specifically, if we set
    \begin{equation*}
    \delta_i =
    \frac{p\left(1 +\frac{\gamma_{i-1}-1-p}{\gamma_{i-1}-1+p}\right)^2}{p + \frac{\gamma_{i-1}-1-p}{\gamma_{i-1}-1+p}(\gamma_{i-1}-1+p)}=
    \frac{p\left(\frac{2(\gamma_{i-1}-1)}{\gamma_{i-1}-1+p}\right)^2}{\gamma_{i-1}-1}
    =\frac{4p(\gamma_{i-1}-1)}{(\gamma_{i-1}-1+p)^2}\,,
    \end{equation*}
    then we have $\delta_i\gamma_{i-1} = \gamma_i$.

    We now show by induction on $i$ that $\gamma_i \leq p+1+\frac{4p}{i}$. In the case $i = 1$, we have $\gamma_1 = 4p$ and the claim follows immediately from $p \geq 1$. In the general case $i > 0$, and we may assume without loss of generality that $\gamma_{i-1} \geq 1$. Otherwise the theorem holds immediately, as each subsequent pass can only increase the value of the solution. Then, we note (as shown in Appendix~\ref{sec:calc-proof-theor}) that for $p \geq 1$ and $\gamma_{i-1} \geq 1$, $\gamma_i$ is an increasing function of $\gamma_{i-1}$. By the induction hypothesis, $\gamma_{i-1} \leq p+1+\frac{4p}{i-1}$. Therefore:
    \begin{equation*}
    \gamma_i \leq \frac{4p\left(p+1+\frac{4p}{i-1}\right)\left(p+\frac{4p}{i-1}\right)}{\left(2p + \frac{4p}{i-1}\right)^2} \leq p + 1 + \tfrac{4p}{i}\,,
    \end{equation*}
    as required. The last inequality above follows from straightforward but tedious algebraic manipulations, which can be found in Appendix~\ref{sec:calc-proof-theor}.
    \end{proof}



    \section{A multi-pass algorithm for general submodular \,\, \, functions}
    \label{sec:rand-multi-pass}
    In this section, we show that the guarantees for monotone submdodular maximization can be extended to non-monotone submodular maximization even when dealing with multiple passes. Our main algorithm is given by procedure \mrSLS{} in Algorithm~\ref{alg:multi-pass-random}. In each pass, it calls a procedure \rSLS{}, which is an adaptation of \SLS{}, to process the stream.
    Note that each such pass produces a pair of feasible solutions $S$ and $S'$, which we now maintain throughout \mrSLS{}. The set $S$ is maintained similarly as before and gradually improves by exchanging ``good'' elements into a solution throughout the pass. The set $S'$ will be maintained by considering the best output of an offline algorithm that we run after each pass as described in more detail below.
    \begin{algorithm}[t]
    \SetKwProg{myproc}{procedure}{}{}
    \myproc{$\textsc{MultipassRandomizedLocalSearch}(\alpha, \beta_1,\ldots,\beta_d,m)$}{
    $S_0 \gets \emptyset, S'_0 \gets \emptyset$\;
    \For{$i = 1$ \KwTo $d$}{
      Let $(\tilde{S}, S')$ be the output of $\textsc{RandomizedLocalSearch}(S_{i-1},\alpha, \beta_i,m)$\;
      $ S_i\gets \tilde{S}$, $S'_i \gets \argmax \{f(S'_{i-1}), f(S')\}$\;
    }
    \Return $\bar{S} = \argmax \{ f(S_d), f(S'_d) \}$\;
    }
    \BlankLine
    \myproc{$\textsc{RandomizedLocalSearch}(\Sinit,\alpha,\beta,m)$}{
    $S \gets \Sinit$; $B \gets \emptyset$\;
    \ForEach{$x$ in the stream}
    {
      \If{$f(x\mid S) \geq \alpha + (1+\beta) \sum_{e \in C_x} \nu(e, S)$}
      {
      $B \gets B + x$\;
      }
      \If{ $\card{B} = m$}
      {
      $x \gets$ uniformly random element from $B$\;
      $C_x \gets \textsc{Exchange}(x, S)$\;
      $B \gets B - x$; $S \gets S + x - C_x$\;
        \ForEach{$x'$ in $B$}
        {
          $C_{x'} \gets \textsc{Exchange}(x',S)$\;
          \If{$f(x'\mid S) < \alpha + (1+\beta) \sum_{e \in C_{x'}} \nu(e, S)$}
          {
          $B \gets B - x'$\;
          }
        }
      }
    }
    $S' \gets \textsc{Offline}(B)$\;
    \Return $(S, S')$\;
    }
    \caption{The randomized multi-pass streaming algorithm}
    \label{alg:multi-pass-random}
    \end{algorithm}

    To deal with non-monotone submodular functions, we will limit the probability of elements being added to $S$. Instead of exchanging good elements on arrival, we store them in a buffer $B$ of size $m$. When the buffer becomes full, an element is chosen uniformly at random and added to $S$.
    Adding a new element to the current solution may affect the quality of the remaining elements in the buffer and thus we need to re-evaluate them and remove the elements that are no longer good. As before, we let $A$ denote the set of elements that were previously added to $S$ during the current pass of the algorithm. Note that we do not consider an element to be accepted until it has actually been added to $S$ from the buffer. For any fixed set of random choices, the execution of \rSLS{} can be considered as the execution of \SLS{} on the following stream: we suppose that an element $x$ arrives whenever it is selected from the buffer and accepted into $S$. All elements that are discarded from the buffer after accepting $x$ then arrive, and will also be rejected by \SLS{}. Any elements remaining in the buffer after the execution of the algorithm do not arrive in the stream. Applying Lemma~\ref{lem:single-pass} with respect to this pretend stream ordering allows us to bound $f(\tilde{S})$ with respect to  $f(\opt \setminus B)$ (that is, the value of the part of $\opt$ that does not remain in the buffer $B$) after a single pass of \rSLS. Formally, let $\tilde{B}_i$ be the value of the buffer after the $i^{\mathrm{th}}$ pass of our algorithm. Then, applying Lemma~\ref{lem:single-pass} to the set $\opt \bb \tilde{B}_i$, and taking expectation, gives:
    \begin{multline}
        \esp{f(A_{i} \cup (\opt \bb \tilde{B}_i))}
        \leq \lb p/\beta + p - 1 \rb \left( \esp{ f( S_i
     ) } - \esp{f ( S_{i-1} )} \right) \\
      \qquad + (p + \beta p + 1) \esp{f( S_{i})} + \alpha k\,.
    \label{eq:opt-1}
    \end{multline}
    In order to bound the value of the elements in $\tilde{B}_i$, we apply any offline $\gammaoff$-approximation algorithm \textsc{Offline} to the buffer at the end of the pass to obtain a solution $S'$. In \mrSLS{}, we then remember the best such offline solution $S'_i$ computed across the first $i$ passes. Then, in the $i^{\mathrm{th}}$ pass, we have
     \begin{equation}
    \label{eq:opt-2}
    \esp{f(\opt \cap \tilde{B}_i)} \leq \gammaoff \esp{f(S')} \leq \gammaoff \esp{f(S'_i)}\,.
    \end{equation}
    From submodularity of $f$ and $A_i \cap \tilde{B}_i = \emptyset$ we have $f(A_{i} \cup \opt) \leq f(A_i \cup (\opt \bb \tilde{B}_i)) + f(\opt \cap \tilde{B}_i)$. Thus, combining \eqref{eq:opt-1} and \eqref{eq:opt-2} we have:
    \begin{multline}
    \label{eq:opt-3}
    \esp{f(A_{i} \cup \opt)}
        \leq (p/\beta+ p - 1)\left( \esp{ f( S_i
     ) } - \esp{f ( S_{i-1} )} \right) \\
    + (p + \beta p + 1) \esp{f( S_{i})} + \gammaoff \esp{f(S'_i)} + \alpha k\,.
    \end{multline}
    To relate the right-hand side to $f(\opt)$ we use the following result
    from Buchbinder et al.~\cite{DBLP:conf/soda/BuchbinderFNS14}:
    \begin{lemma}[Lemma 2.2 in {\cite{DBLP:conf/soda/BuchbinderFNS14}}]
    Let $f \colon 2^X \rightarrow \RR_{\geq 0}$ be a non-negative submodular function. Suppose that $A$ is a random set where no element $e \in X$ appears  in $A$ with probability more than $p$.
    Then, $\esp{f(A)} \geq \lb 1 - p \rb f\lb \emptyset\rb$. Moreover, for any set $Y \subseteq X$, it follows that $\esp{f\lb Y \cup A\rb\,} \geq (1-p) {f(Y)}$.
    \label{lem:thm3}
    \end{lemma}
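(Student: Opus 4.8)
The plan is to establish the base inequality $\esp{f(A)} \geq (1-p)f(\emptyset)$ first and then derive the ``moreover'' part as a corollary. For the reduction I would fix $Y \subseteq X$ and pass to the shifted set function $g(B) \triangleq f(Y \cup B)$. One checks directly that $g$ is again non-negative and submodular, that $g(\emptyset) = f(Y)$, and that the same random set $A$ still includes every element with probability at most $p$. Applying the base inequality to $g$ then gives $\esp{f(Y \cup A)} = \esp{g(A)} \geq (1-p)g(\emptyset) = (1-p)f(Y)$, which is precisely the second statement. So the whole lemma reduces to the case $Y = \emptyset$.

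For the base case, the key idea is to route the estimate through the Lov\'asz extension of $f$ so that the correlations never have to be handled by hand. Write $x_e \triangleq \prob[e \in A]$, so $0 \leq x_e \leq p$ for every $e$. The distribution of $A$ is one particular coupling of the marginal vector $x = (x_e)_{e \in X}$, and I would invoke the classical fact that for a \emph{submodular} function the Lov\'asz extension $\hat f$ equals the convex closure, i.e. $\hat f(x) = \min\{\sum_S \alpha_S f(S) : \alpha \geq 0,\ \sum_S \alpha_S = 1,\ \sum_S \alpha_S \mathbf{1}_S = x\}$ is the minimum of $\esp{f(\cdot)}$ taken over all distributions with marginal vector $x$. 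Since $A$ realises such a distribution, this yields $\esp{f(A)} \geq \hat f(x)$ immediately.

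It then remains only to lower bound $\hat f(x)$, for which I would use the threshold representation $\hat f(x) = \int_0^1 f(\{e : x_e \geq \theta\})\,d\theta$. Because $x_e \leq p$ for all $e$, the level set $\{e : x_e \geq \theta\}$ is empty as soon as $\theta > p$, so that tail of the integral contributes exactly $\int_p^1 f(\emptyset)\,d\theta = (1-p)f(\emptyset)$. The remaining portion $\int_0^p f(\{e : x_e \geq \theta\})\,d\theta$ is non-negative since $f$ is non-negative, so discarding it gives $\hat f(x) \geq (1-p)f(\emptyset)$, which finishes the base case.

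The step I expect to be the crux is the passage from the correlated expectation to $\hat f(x)$. A naive alternative --- bounding $f(A)$ in each realisation by a modular/singleton expression and then taking expectations --- is too lossy: the total singleton loss $\sum_e (f(\emptyset) - f(\{e\}))$-type quantity can exceed $f(\emptyset)$ (as one already sees for $f(T) = \mathbf{1}[T \neq X]$), so such bounds cannot reach the target $(1-p)f(\emptyset)$ once an adversary is allowed to correlate the inclusions. The convex-closure identity is exactly the device that absorbs the worst-case correlation into a one-dimensional threshold integral, after which non-negativity of $f$ does all the remaining work. If one wishes to avoid invoking this identity, the alternative is the more delicate induction of Buchbinder et al., whose difficulty arises from the same source, namely that conditioning on a single element can inflate the inclusion probabilities of the others beyond $p$.
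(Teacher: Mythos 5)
Your proof is correct. One thing to be aware of: the paper itself gives no proof of this statement---it is imported verbatim as Lemma~2.2 of Buchbinder, Feldman, Naor, and Schwartz~\cite{DBLP:conf/soda/BuchbinderFNS14}, so there is no in-paper argument to compare against; the comparison has to be with the cited source. Against that benchmark, your argument is essentially the standard (and the original) one: the reduction of the ``moreover'' clause to the base case via the shifted function $g(B) = f(Y \cup B)$ (non-negative, submodular, $g(\emptyset) = f(Y)$, same marginals) is exactly right, and the base case via the convex closure is the key device---since $f$ is submodular, its Lov\'asz extension coincides with the convex closure, so $\esp{f(A)} \geq \hat{f}(x)$ holds for \emph{any} coupling with marginal vector $x$, and the threshold representation $\hat{f}(x) = \int_0^1 f(\lc e : x_e \geq \theta \rc)\,d\theta$ together with $x_e \leq p$ and $f \geq 0$ gives $\hat{f}(x) \geq (1-p) f(\emptyset)$. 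Your diagnosis of why a naive singleton/modular bound fails under adversarial correlation is also accurate. The only place where a full write-up would need care is the classical identity you invoke (Lov\'asz extension $=$ convex closure for submodular functions); that is precisely where submodularity is used, and citing or proving it (by uncrossing an arbitrary distribution into a chain) would complete the argument.
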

    We remark that a similar theorem also appeared earlier in Feige, Mirrokni, and Vondr\'ak~\cite{DBLP:journals/siamcomp/FeigeMV11} for a random set that contains each element \emph{independently} with probability \emph{exactly} $p$. Here, the probability that an element occurs in $A_i$ is delicate to handle because such an element may either originate from the starting solution $S_{i-1}$ or be added during the pass. Thus, we use a rougher estimate.
    By definition $ A_{i} \subseteq A_{i} \cup A_{i-1}\cup \ldots \cup A_1$. Thus,
    $\prob\ld e \in A_{i} \rd \leq \prob\ld e \in A_{i} \cup \ldots \cup A_1 \rd$.
    The number of selections during the $j^\textrm{th}$ pass is at most $\card{A_j}$ and by Lemma~\ref{lem:single-pass} (applied to the set $\opt \setminus \tilde{B}_j$ due to our pretend stream ordering in each pass $j$), $\card{A_j} \leq f(\opt \setminus \tilde{B_j})/\alpha \leq f(\opt)/\alpha$ in any pass. Here, the second inequality follows from the optimality of $\opt$, and the fact that any subset of the feasible solution $\opt$ is also feasible for our $p$-matchoid constraint. Thus, the total number of selections in the first $i$ passes at most $\sum_{j=1}^i \card{A_j} \leq i\cdot f(\opt)/\alpha$. We select an element only when the buffer is full, and each selection is made independently and uniformly at random from the buffer. Thus, the probability that any given element is selected when the algorithm makes a selection is at most $1/m$ and by a union bound, $\prob\ld e \in A_{i} \cup \ldots \cup A_1 \rd \leq i\cdot f(\opt)/(m\alpha)$. Let $d$ be the number of passes that the algorithm makes and suppose we set $\alpha = \e f(\opt)/2k$ (in Appendix~\ref{sec:analys-non-monot} we show that this can be accomplished approximately by guessing $f(\opt)$, which can be done at the expense of an extra factor $O(\log k)$ space). Finally, let $m = 4d k /\e^2$. Then, applying Lemma~\ref{lem:thm3}, after $i \leq d$ passes we have:
    \begin{equation}
    \esp{f(A_i \cup \opt)} \geq \lb 1 - d \cdot f(\opt)/(m \alpha)\rb f(\opt) \geq \lb 1 - \e/2 \rb f(\opt)\,.
    \label{eq:non-monotone-bound1}
    \end{equation}
    Our definition of $\alpha$ also implies that $\alpha k \leq \e /2f(\opt)$. Using this and equation \eqref{eq:non-monotone-bound1} in \eqref{eq:opt-3}, we obtain:
    \begin{multline}
    (1 - \e)f(\opt) \\
    \leq (p/\beta + p -1)(\esp{f(S_i)} - \esp{f(S_{i-1})})+(p+\beta p + 1)\esp{f(S_i)} + \gammaoff\esp{f(S'_i)}\,.\label{eq:non-monotone-main}
    \end{multline}
    As we show in Appendix~\ref{sec:analys-non-monot}, the rest of the analysis then follows similarly to that in Section~\ref{sec:select-param-beta_i}, using the fact that $f(\bar{S}) = \max\{f(S_d),f(S'_d)\}$.
    \begin{restatable}{theorem}{nonmonmain} Let $\cM^p = \lb X, \cI \rb$  be a p-matchoid of rank $k$ and let $f \colon 2^X \rightarrow \posreals$ be a non-negative submodular function. Suppose there exists an algorithm for the offline instance of the problem with approximation factor $\gammaoff$. For any $\e > 0$, the randomized streaming local-search algorithm returns a solution $\bar{S} \in \cI$ such that
    \begin{equation*}
    \fopt \leq \lb p +1 + {\gammaoff} + \bigO{\e} \rb \esp{f(\bar{S})}
    \end{equation*}
    using a total space of $\bigO{\frac{p^3 k\log_2 k}{\e^3}}$ and $\bigO{\frac{p}{\e}}$-passes.
    \label{thm:non-mono-matroid}
    \end{restatable}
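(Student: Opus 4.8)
The plan is to reuse, essentially verbatim, the parameter-balancing argument of Section~\ref{sec:select-param-beta_i}, now applied to the expected per-pass inequality~\eqref{eq:non-monotone-main}, by folding the offline contribution $\gammaoff\esp{f(S'_i)}$ into an \emph{effective optimum} that only decreases across passes. Write $T_i := \esp{f(S_i)}$ and $T'_i := \esp{f(S'_i)}$ and set $\delta_i := T_{i-1}/T_i$. Two monotonicity facts make this possible. First, every exchange accepted by \rSLS{} strictly increases $f(S)$ (by at least $\alpha + \beta\Delta$, exactly as in the proof of Lemma~\ref{lem:exit-values}) and each pass starts from the previous solution, so $T_i$ is non-decreasing. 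Second, since $S'_i = \argmax\lc f(S'_{i-1}), f(S')\rc$ is the running best offline solution, $T'_i$ is non-decreasing as well. I also note that the parameters $\beta_i$ are fixed in advance by the deterministic recurrence of Theorem~\ref{thm:monotone-p-matroid}, and in particular do not depend on the random quantities $T_i$ and $T'_i$; and that, because~\eqref{eq:non-monotone-main} is linear in the $T_i$ and $T'_i$, the fact that $\delta_i$ is a ratio of \emph{expectations} rather than of deterministic values causes no difficulty.

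I would define $g_i := \lb 1 - \e \rb \fopt - \gammaoff T'_i$. Since $T'_i$ is non-decreasing we have $g_i \leq g_{i-1}$, and~\eqref{eq:non-monotone-main} rearranges to
\[
g_i \leq \lb p/\beta_i + p - 1 \rb \lb T_i - T_{i-1} \rb + \lb p + \beta_i p + 1 \rb T_i \, .
\]
The goal is then to prove by induction the invariant $g_i \leq \gamma_i T_i$, where $\gamma_i$ is precisely the deterministic sequence of Theorem~\ref{thm:monotone-p-matroid} ($\gamma_1 = 4p$ with the stated recurrence). For the base case $i = 1$ we have $\beta_1 = 1$ and $T_0 = f(\emptyset) \geq 0$, so discarding the non-positive term $-\lb 2p - 1\rb T_0$ gives $g_1 \leq 4p\, T_1 = \gamma_1 T_1$. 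For the inductive step, the hypothesis $g_{i-1} \leq \gamma_{i-1} T_{i-1}$ combined with $g_i \leq g_{i-1}$ yields $g_i \leq \gamma_{i-1}\delta_i T_i$, while the rearranged bound above gives $g_i \leq \ld \lb p/\beta_i + p - 1\rb\lb 1 - \delta_i\rb + p + \beta_i p + 1 \rd T_i$. These are exactly the two branches of Theorem~\ref{thm:guarantee-main} with $g_i$ in place of $\fopt$ and $k\alpha = 0$, so taking their minimum and choosing $\beta_i$ to equalize them at the crossover value of $\delta_i$ from~\eqref{eq:delta}---the same computation already done in Theorem~\ref{thm:monotone-p-matroid}---gives $g_i \leq \gamma_{i-1}\delta_i T_i = \gamma_i T_i$.

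The elementary estimate from the proof of Theorem~\ref{thm:monotone-p-matroid} then gives $\gamma_i \leq p + 1 + \tfrac{4p}{i}$. Unwinding the invariant after $d$ passes,
\[
\lb 1 - \e\rb \fopt \leq \gamma_d T_d + \gammaoff T'_d \leq \lb p + 1 + \tfrac{4p}{d}\rb \esp{f(S_d)} + \gammaoff \esp{f(S'_d)} \, .
\]
Because $f(\bar{S}) = \max\lc f(S_d), f(S'_d)\rc$ dominates both of its arguments, $\esp{f(S_d)} \leq \esp{f(\bar{S})}$ and $\esp{f(S'_d)} \leq \esp{f(\bar{S})}$, so the right-hand side is at most $\lb p + 1 + \gammaoff + \tfrac{4p}{d}\rb\esp{f(\bar{S})}$. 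Choosing $d = \tfrac{4p}{\e}$ makes $\tfrac{4p}{d} = \e$, and dividing through by $1 - \e = 1 - \bigO{\e}$ yields $\fopt \leq \lb p + 1 + \gammaoff + \bigO{\e}\rb\esp{f(\bar{S})}$. The number of passes is $d = \bigO{p/\e}$, and substituting the parameter choices $\alpha = \e\fopt/(2k)$ and $m = 4dk/\e^2$ from the lead-up to~\eqref{eq:non-monotone-bound1}, together with the $\bigO{\log k}$ overhead for guessing $f(\opt)$ (Appendix~\ref{sec:analys-non-monot}), gives the claimed space bound.

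The conceptual crux---and the only genuine departure from the monotone analysis---is the observation that the offline term $\gammaoff\esp{f(S'_i)}$ can be moved to the left and absorbed into the monotonically decreasing quantity $g_i$; this is exactly what lets the delicate balancing of $\delta_i$ and $\beta_i$ be reused without any modification. The most technically demanding part of the argument, namely controlling the probability $\prob\ld e \in A_i\rd$ that an element has been accepted across all passes---which is what forces the buffer size $m$ and hence the space bound, and which feeds Lemma~\ref{lem:thm3} to produce~\eqref{eq:non-monotone-bound1}---has already been dispatched in establishing~\eqref{eq:non-monotone-main}, so the remaining work is essentially bookkeeping layered on top of the monotone proof.
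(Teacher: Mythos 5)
Your core induction is sound and is, up to rearrangement, the same argument the paper gives in Appendix~\ref{sec:analys-non-monot}: your invariant $g_i \leq \gamma_i \esp{f(S_i)}$ is literally the paper's hypothesis $(1-\e)\fopt \leq \gamma_i\esp{f(S_i)} + \gammaoff\esp{f(S'_i)}$ moved to one side, and both proofs combine monotonicity of the running best offline solution (one branch) with \eqref{eq:non-monotone-main} (the other branch), balanced at the crossover value of $\delta_i$ from \eqref{eq:delta}. Everything up to the statement $(1-\e)\fopt \leq \lb p+1+\gammaoff+\e\rb\esp{f(\bar{S})}$ after $d = 4p/\e$ passes is correct.

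The gap is in your last step. Dividing by $1-\e$ does \emph{not} yield an additive error of $O(\e)$: since $(1-\e)^{-1} \geq 1+\e$, the multiplicative blow-up acts on the leading term, giving
\begin{equation*}
\fopt \;\leq\; \bigl(p+1+\gammaoff + \e\lb p+1+\gammaoff\rb + O(\e)\bigr)\esp{f(\bar{S})} \;=\; \bigl(p+1+\gammaoff+O(p\e)\bigr)\esp{f(\bar{S})}\,,
\end{equation*}
where the last equality uses that $\gammaoff = \Theta(p)$ for the best-known offline $p$-matchoid algorithms. The theorem claims a $p$-independent additive $O(\e)$, and this is exactly why the paper's proof of Theorem~\ref{thm:non-mono-matroid} reruns the whole argument with the rescaled parameter $\e' = \e/p$ (i.e., $\alpha = \e'\fopt/(2k)$, buffer $m = 4dk/\e'^2$, $d = O(p/\e)$ passes), so that the multiplicative loss $(1-\e')^{-1}\lb 1+O(d^{-1})\rb$ applied to $p+1+\gammaoff = O(p)$ contributes only $O(\e)$. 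This rescaling is also where the theorem's space bound comes from: $m = 4dk/\e'^2 = O(p^3 k/\e^3)$, so the $p^3$ in the statement is precisely the footprint of the step you omitted. With your parameters you instead get a different, incomparable tradeoff---space $O(pk\log k/\e^3)$ but ratio $p+1+\gammaoff+O(p\e)$---which does not prove the theorem as stated. The fix is mechanical: substitute $\e/p$ for $\e$ throughout your argument, after which both the approximation and the space bound match the claim.
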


    \bibliographystyle{abbrv}
    \bibliography{submodular}

\begin{thebibliography}{10}

\bibitem{Badanidiyuru:2014ib}
A.~Badanidiyuru, B.~Mirzasoleiman, A.~Karbasi, and A.~Krause.
\newblock Streaming submodular maximization: massive data summarization on the
  fly.
\newblock In S.~A. Macskassy, C.~Perlich, J.~Leskovec, W.~Wang, and R.~Ghani,
  editors, {\em The 20th {ACM} {SIGKDD} International Conference on Knowledge
  Discovery and Data Mining, {KDD} '14, New York, NY, {USA} - August 24 - 27,
  2014}, pages 671--680. {ACM}, 2014.

\bibitem{Badanidiyuru:2013jc}
A.~Badanidiyuru and J.~Vondr{\'a}k.
\newblock Fast algorithms for maximizing submodular functions.
\newblock In {\em Proc. ACM-SIAM Symposium on Discrete Algorithms (SODA)},
  pages 1497--1514, 2013.

\bibitem{Buchbinder:2016}
N.~Buchbinder and M.~Feldman.
\newblock Constrained submodular maximization via a nonsymmetric technique.
\newblock {\em Mathematics of Operations Research}, 44(3):988--1005, 2019.

\bibitem{DBLP:conf/soda/BuchbinderFNS14}
N.~Buchbinder, M.~Feldman, J.~Naor, and R.~Schwartz.
\newblock Submodular maximization with cardinality constraints.
\newblock In {\em Proc. {ACM-SIAM} Symposium on Discrete Algorithms (SODA)},
  pages 1433--1452, 2014.

\bibitem{Calinescu:2011ju}
G.~Calinescu, C.~Chekuri, M.~P{\'a}l, and J.~Vondr{\'a}k.
\newblock Maximizing a monotone submodular function subject to a matroid
  constraint.
\newblock {\em SIAM Journal on Computing}, 40(6):1740--1766, 2011.

\bibitem{DBLP:journals/mp/ChakrabartiK15}
A.~Chakrabarti and S.~Kale.
\newblock Submodular maximization meets streaming: matchings, matroids, and
  more.
\newblock {\em Mathematical Programming}, 154(1-2):225--247, 2015.

\bibitem{DBLP:conf/icalp/ChekuriGQ15}
C.~Chekuri, S.~Gupta, and K.~Quanrud.
\newblock Streaming algorithms for submodular function maximization.
\newblock In M.~M. Halld{\'{o}}rsson, K.~Iwama, N.~Kobayashi, and B.~Speckmann,
  editors, {\em Automata, Languages, and Programming - 42nd International
  Colloquium, {ICALP} 2015, Kyoto, Japan, July 6-10, 2015, Proceedings, Part
  {I}}, volume 9134 of {\em Lecture Notes in Computer Science}, pages 318--330.
  Springer, 2015.

\bibitem{DBLP:journals/siamcomp/ChekuriVZ14}
C.~Chekuri, J.~Vondr{\'{a}}k, and R.~Zenklusen.
\newblock Submodular function maximization via the multilinear relaxation and
  contention resolution schemes.
\newblock {\em {SIAM} Journal on Computing}, 43(6):1831--1879, 2014.

\bibitem{Feige1998}
U.~Feige.
\newblock A threshold of $\ln n$ for approximating set cover.
\newblock {\em Journal of the ACM}, 45(4):634--652, 1998.

\bibitem{DBLP:journals/siamcomp/FeigeMV11}
U.~Feige, V.~S. Mirrokni, and J.~Vondr{\'{a}}k.
\newblock Maximizing non-monotone submodular functions.
\newblock {\em SIAM Journal on Computing}, 40(4):1133--1153, 2011.

\bibitem{Feldman2018DoLess}
M.~Feldman, A.~Karbasi, and E.~Kazemi.
\newblock Do less, get more: streaming submodular maximization with
  subsampling.
\newblock In {\em Advances in Neural Information Processing Systems (NeurIPS)},
  pages 732--742, 2018.

\bibitem{DBLP:conf/focs/FeldmanNS11}
M.~Feldman, J.~Naor, and R.~Schwartz.
\newblock A unified continuous greedy algorithm for submodular maximization.
\newblock In {\em Proc. IEEE Symposium on Foundations of Computer Science,
  (FOCS)}, pages 570--579, 2011.

\bibitem{Feldman2011}
M.~Feldman, J.~S. Naor, R.~Schwartz, and J.~Ward.
\newblock Improved approximations for k-exchange systems.
\newblock In {\em Proc. European Symposium on Algorithms (ESA)}, pages
  784--798, 2011.

\bibitem{Feldman2020}
M.~Feldman, A.~Norouzi-Fard, O.~Svensson, and R.~Zenklusen.
\newblock The one-way communication complexity of submodular maximization with
  applications to streaming and robustness.
\newblock In {\em Proc. ACM Symposium on Theory of Computing (STOC)}, pages
  1363--1374, 2020.

\bibitem{FisherNemhauserWolsey}
M.~L. Fisher, G.~L. Nemhauser, and L.~A. Wolsey.
\newblock An analysis of approximations for maximizing submodular set functions
  {II}.
\newblock {\em Mathematical Programming Study}, 8:73--87, 1978.

\bibitem{Gharan:2011}
S.~O. Gharan and J.~Vondr{\'a}k.
\newblock Submodular maximization by simulated annealing.
\newblock In {\em Proc. ACM-SIAM Symposium on Discrete Algorithms (SODA)},
  pages 1098--1116, 2011.

\bibitem{HuangKakimuraMultiPass2018}
C.-C. Huang and N.~Kakimura.
\newblock Multi-pass streaming algorithms for monotone submodular function
  maximization.
\newblock {\em CoRR}, abs/1802.06212, 2018.

\bibitem{Lee:2010}
J.~Lee, M.~Sviridenko, and J.~Vondr{\'a}k.
\newblock Submodular maximization over multiple matroids via generalized
  exchange properties.
\newblock {\em Mathematics of Operations Research}, 35(4):795--806, 2010.

\bibitem{Levin:2020:Streaming}
R.~Levin and D.~Wajc.
\newblock Streaming submodular matching meets the primal-dual method.
\newblock {\em arXiv preprint arXiv:2008.10062}, 2020.

\bibitem{lin:2010wpa}
H.~Lin and J.~A. Bilmes.
\newblock Multi-document summarization via budgeted maximization of submodular
  functions.
\newblock In {\em Human Language Technologies: Conference of the North American
  Chapter of the Association of Computational Linguistics, Proceedings, June
  2-4, 2010, Los Angeles, California, {USA}}, pages 912--920. The Association
  for Computational Linguistics, 2010.

\bibitem{mcGregor2017}
A.~McGregor and H.~T. Vu.
\newblock Better streaming algorithms for the maximum coverage problem.
\newblock {\em Theory of Computing Systems}, 63(7):1595--1619, 2019.

\bibitem{mirzasoleiman2016fast}
B.~Mirzasoleiman, A.~Badanidiyuru, and A.~Karbasi.
\newblock Fast constrained submodular maximization: Personalized data
  summarization.
\newblock In {\em ICML}, pages 1358--1367, 2016.

\bibitem{Mirzasoleiman:2015}
B.~Mirzasoleiman, A.~Badanidiyuru, A.~Karbasi, J.~Vondr\'{a}k, and A.~Krause.
\newblock Lazier than lazy greedy.
\newblock In {\em Proc. AAAI Conference on Artificial Intelligence (AAAI)},
  pages 1812--1818, 2015.

\bibitem{mirzasoleiman18streaming}
B.~Mirzasoleiman, S.~Jegelka, and A.~Krause.
\newblock Streaming non-monotone submodular maximization: Personalized video
  summarization on the fly.
\newblock {\em arXiv preprint arXiv:1706.03583}, 2017.

\bibitem{Nemhauser:1978dm}
G.~L. Nemhauser and L.~A. Wolsey.
\newblock Best algorithms for approximating the maximum of a submodular set
  function.
\newblock {\em Mathematics of Operations Research}, 3(3):177--188, 1978.

\bibitem{Ashkan2018}
A.~Norouzi-Fard, J.~Tarnawski, S.~Mitrovi{\'c}, A.~Zandieh, A.~Mousavifar, and
  O.~Svensson.
\newblock Beyond 1/2-approximation for submodular maximization on massive data
  streams.
\newblock In {\em Proc. International Conference on Machine Learning (ICML)},
  pages 3826--3835, 2018.

\end{thebibliography}


    \appendix

    \section{Proof of Lemma~\ref{lem:opt-R}}
    \label{sec:proof-lemma-refl}
    Here, we give a self-contained analysis of the single-pass algorithm of Chekuri, Gupta, and Quanrud~\cite{DBLP:conf/icalp/ChekuriGQ15}, corresponding to Algorithm~\ref{alg:multi-pass} initialized with $\Sinit = \emptyset$. First, we prove Lemma~\ref{lem:incremental-properties}, which concerns properties of the incremental values maintained by Algorithm~\ref{alg:multi-pass}.
    \label{sec:analys-algor-refalg}
    \incremental*
    \begin{proof}
    Property (1) follows directly from the telescoping summation
    \begin{equation*}
    \sum_{e\in T}\nu(e,T) = \sum_{e \in T}[f(e \cup \{t' \in T : t' \prec e\}) - f(\{t' \in T : t' \prec e\}] = f(T) - f(\emptyset).
    \end{equation*}

    Property (2) follows from submodularity since $T\subseteq U$ implies that $\{t' \in T:t' \prec e\} \subseteq \{t' \in U : t' \prec e\}$.

    For property (3), we note that:
    \begin{align*}
    f(T \mid U \bb T) & = \sum_{t \in T}f(t \mid U \bb T \cup \{t' \in T:t' \prec t\}), \\
    &\leq \sum_{t \in T} f(t \mid \{u' \in U:u' \prec t\}),\\
    & = \sum_{t \in T} \nu(t, U)\,,
    \end{align*}
    where the first equation follows from a telescoping summation, and the inequality follows from submodularity, since $\{u' \in U : u' \prec t\} \subseteq U \setminus T \cup \{t'\in T:t' \prec t\}$.

    We prove property (4) by induction on the stream of elements arriving. Initially $S = \emptyset$. Thus, the first time that any element $x$ is accepted, we must have $C_x = \emptyset$ and so $f(x \mid S) \geq \alpha \geq 0$. After this element is accepted, we have $\nu(x,S) = \nu(x,\{x\}) = f(x \mid \emptyset) = \alpha$. Proceeding inductively, then, let $S_x^-$ and $S_x^+$ be the set of elements in $S$ before and after some new element $x$ arrives and is processed by Algorithm~\ref{alg:multi-pass}, and suppose that $\nu(s,S_x^-) \geq \alpha$ for all $s \in S_x^-$. Then, if $x$ is rejected, we have $S_x^+ = S_x^-$ and so $\nu(s,S_x^+) = \nu(s,S_x^-) \geq \alpha$ for all $s \in S_x^+$. If $x$ is accepted, then $S_x^+ = S \bb C_x + x$ and $f(x\mid S_x^-) \geq \alpha + (1+\beta)\sum_{e \in C_x}\nu(e,S_x^-)$. Thus,
    \begin{equation*}
    \nu(x,S_x^+) \geq f(x \mid S_x^+ - x) \geq f(x  \mid S_x^-) \geq \alpha + (1+\beta)|C_x|\alpha \geq \alpha\,,
    \end{equation*}
    where the first inequality follows from property (2) of the lemma, the second from submodularity, and the third from the induction hypothesis and the assumption that $x$ is accepted. For any other $s \in S_x^+$, we have $\{t' \in S \bb C_x : t' \prec s\} \subseteq \{t' \in S : t' \prec s\}$ and so by property (3) of the lemma, $\nu(s, S_x^+) \geq \nu(s, S_x^-) \geq \alpha$, as required.
    \end{proof}

    In our analysis we will use the following structural lemma from Chekuri et al.~\cite{DBLP:conf/icalp/ChekuriGQ15} (here, restated in our notation). This lemma applies to the execution of our algorithm \SLS{} when $\Sinit = \emptyset$, and so no element is discarded upon arrival due to $x \in \Sinit$. However, we note that the execution of our algorithm is in fact exactly the same as this algorithm executed on the pretend stream ordering introduced in Section~\ref{sec:multi-pass-main} to define the incremental values $\nu$. Specifically, in each pass of our algorithm, the set $\Sinit$ is a feasible solution produced by the preceding pass and in the pretend stream ordering, all elements of $\Sinit$ arrive in our pretend ordering in the same relative (pretend) order as this preceding pass. It follows that whenever $x \in \Sinit$ arrives in our pretend ordering for the present pass, we have $C_x = \emptyset$ and $\nu(x,S) = \nu(x,\Sinit) \geq \alpha$ by Lemma~\ref{lem:incremental-properties}~(4), since $x$ was present in the feasible solution $S = \Sinit$ at the end of the preceding pass. Thus, each $x \in \Sinit$ will first be accepted in our pretend stream ordering, and then the rest of  $X \bb \Sinit$ is processed, exactly as in \SLS{}.

    Recall that we let $\tilde{A}$ be the set of all elements that were accepted by this pass of \SLS{} (and so at some point appeared in $S$). For each element $x \in X$, we let $S_x^-$ be the current set $S$ at the moment that $x$ arrives and $S_x^+$ the set after $x$ is processed. For an element $e$ that is accepted but later \emph{evicted} from $S$, let $\chi(e)$ be the incremental value $\nu(e,S)$ of $e$ at the moment that $e$ was evicted.
    \begin{lemma}[Lemma 9 of {\cite{DBLP:conf/icalp/ChekuriGQ15}}]
    \label{lem:cgq}
    Let $T \in \cI$ be a feasible solution disjoint from $\tilde{A}$, and $\tilde{S}$ be the output of the streaming algorithm. There exists a mapping $\varphi : T \to 2^{\tilde{A}}$ such that:
    \begin{enumerate}
    \item Every $s \in \tilde{S}$ appears in the set $\varphi(t)$ for at most $p$ choices of $t \in T$.
    \item Every $e \in \tilde{A} \bb \tilde{S}$ appears in the set $\varphi(t)$ for at most $p-1$ choices of $t \in T$.
    \item For each $t \in T$:
    \[
    \sum_{c \in C_t}\nu(c,S_t^-) \leq \sum_{e \in \varphi(t) \bb \tilde{S}}\chi(e) + \sum_{s \in \varphi(t) \cap \tilde{S}}\nu(s,\tilde{S})\,.
    \]
    \end{enumerate}
    \end{lemma}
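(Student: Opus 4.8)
The plan is to build the map $\varphi$ one constituent matroid at a time and then set $\varphi(t) = \bigcup_{\ell\,:\,t \in X_\ell} \varphi_\ell(t)$, where each $\varphi_\ell(t)$ is a single element of $\tilde{A} \cap X_\ell$ (or empty if no exchange was forced in $\cM_\ell$ when $t$ arrived). The engine of the construction is the matroid exchange property applied inside each $\cM_\ell$: both $T_\ell := T \cap X_\ell$ and the final set $\tilde{S}_\ell := \tilde{S} \cap X_\ell$ are independent in $\cM_\ell$, so a bijective-exchange argument (Hall's / Rado's theorem specialized to matroids) produces an injection from the $t \in T_\ell$ that require an exchange to elements of $\tilde{A} \cap X_\ell$ that can legitimately absorb the charge of $t$.

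For the counting properties (1) and (2), I would argue purely from injectivity within each matroid together with the $p$-matchoid structure. Since each $\varphi_\ell$ is injective, no element is charged more than once inside a single matroid; as every element lies in at most $p$ of the sets $X_\ell$, any element is charged at most $p$ times in total, which is (1). For (2) the key observation is that an evicted element $e \in \tilde{A} \bb \tilde{S}$ was swapped out of $S$ in exactly the matroid $\cM_{\ell^*}$ in which some later accepted element took its place; in that matroid the slot formerly held by $e$ is thereafter occupied by a surviving element, so the injection $\varphi_{\ell^*}$ never needs to route a charge to $e$. This removes one of $e$'s at most $p$ available matroids and yields the improved bound $p-1$.

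The heart of the argument is the value property (3), and this is where the specific design of \textsc{Exchange} is used. When $t$ arrives and an exchange is forced in $\cM_\ell$, the procedure selects $c_{t,\ell} = \argmin_{y} \nu(y, S_t^-)$ over all $y \in S_t^- \cap X_\ell$ swappable with $t$, so that minimality guarantees $\nu(c_{t,\ell}, S_t^-) \le \nu(y, S_t^-)$ for every such $y$. I would choose $\varphi_\ell(t)$ to be swappable with $t$ at the relevant moment so that this minimality can be invoked, and then transfer the snapshot value $\nu(\cdot, S_t^-)$ to the quantity charged in (3): if $\varphi_\ell(t)$ survives to the end it is charged its final incremental value $\nu(\varphi_\ell(t), \tilde{S})$, and if it is evicted it is charged its exit value $\chi(\varphi_\ell(t))$. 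Summing the resulting per-matroid inequalities over all $\cM_\ell \ni t$ then gives (3).

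The main obstacle is the temporal reconciliation hidden in this last step: the exchange cost is incurred against the intermediate snapshot $S_t^-$, whereas the right-hand side of (3) is expressed using end-of-stream quantities ($\nu(\cdot,\tilde{S})$ and $\chi(\cdot)$). The difficulty is that a final exchange partner of $t$ in $\tilde{S}_\ell$ may have arrived \emph{after} $t$, so it need not lie in the candidate set over which $c_{t,\ell}$ was minimized, and incremental values are not monotone in time, since elements preceding a given element may be both inserted into and evicted from $S$. Overcoming this requires Lemma~\ref{lem:incremental-properties}, in particular part (2) (that $\nu(e,\cdot)$ can only increase when earlier elements leave $S$) and part (4) (that $\nu(e,S) \ge \alpha$ throughout), to argue that the element actually charged carries incremental value at least $\nu(c_{t,\ell}, S_t^-)$ at the moment it is accounted for. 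Getting the bookkeeping of insertions and evictions exactly right, so that every charge is simultaneously valid (the injection exists) and sufficient (the value inequality holds), is the delicate part of the proof.
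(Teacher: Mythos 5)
The paper does not actually reprove this statement: it imports it wholesale as Lemma~9 of Chekuri--Gupta--Quanrud, adding only the pretend-stream-ordering remark needed to apply it to each pass of \SLS{} in the multipass setting. Measured against the underlying CGQ charging argument, your attempt has a genuine gap: you never construct $\varphi$, and the construction you gesture at would fail. A static Hall/Rado-type exchange matching between $T_\ell$ and the \emph{final} set $\tilde{S} \cap X_\ell$ cannot deliver property (3), because the quantity to be covered, $\nu(c_{t,\ell},S_t^-)$, is minimal only among elements swappable with $t$ in the snapshot $S_t^-$ at $t$'s arrival; an end-of-stream exchange partner of $t$ need not have been swappable, or even present, at that moment --- precisely the obstacle you name in your final paragraph and then leave unresolved. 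The actual proof maintains the injection \emph{dynamically, forward in time}: when $t \in T_\ell$ is rejected, it is matched (injectively within $\cM_\ell$, with existence of a free representative guaranteed by a matroid-exchange invariant) to an element of the current $S_\ell$ swappable with $t$ at that instant, so argmin minimality applies; thereafter, whenever a representative $e$ is evicted by an accepted element $x$ that displaced $e$ through matroid $\ell$, the charge is \emph{transferred} from $e$ to $x$. The transfer is value-safe because the acceptance test gives $f(x \mid S_x^-) \geq \alpha + (1+\beta)\sum_{c \in C_x}\nu(c,S_x^-) \geq \nu(e,S_x^-)$, while $\nu(x,S_x^+) \geq f(x \mid S_x^-)$, and since newly accepted elements succeed $e$ in the pretend order, $e$'s predecessor set in $S$ only shrinks over time, so by Lemma~\ref{lem:incremental-properties}~(2) the value $\nu(e,S)$ is nondecreasing while $e \in S$. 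Charges riding on $e$ through the \emph{other} matroids freeze at the exit value $\chi(e)$.

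This transfer mechanism is also what yields property (2), and your argument for it is wrong as stated: you claim the vacated slot in $\cM_{\ell^*}$ is ``thereafter occupied by a surviving element, so the injection $\varphi_{\ell^*}$ never needs to route a charge to $e$.'' But the replacing element $x$ need not survive (it may itself be evicted, forcing a chain of transfers), and a charge may have been routed to $e$ through $\ell^*$ \emph{before} its eviction, so one must actively move that charge off $e$ at eviction time rather than argue it was never placed. The correct statement is that at the moment $e$ is evicted, the charge through at least one matroid (any $\ell$ via which $e$ entered $C_x$) is handed to $x$, so at most $p-1$ charges freeze on $e$; per-matroid injectivity then gives both (1) and (2). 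Your outline does identify all the raw ingredients --- per-matroid decomposition, argmin minimality in \textsc{Exchange}, and monotonicity of incremental values --- but the maintenance-and-transfer bookkeeping you defer as ``the delicate part'' is not a detail to be deferred: it is the entire content of the lemma, and without it neither (2) nor (3) is established.
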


    Using this charging argument, we can now prove Lemma~\ref{lem:opt-R} directly.
    \onepass*
    \begin{proof}
    Let $R = \opt \bb \tilde{A}$. Since $S^-_r \subseteq \tilde{A}$ for all $r$, the submodularity of $f$ implies that
    \begin{equation}
    \label{eq:opt-r}
    \sum_{r \in R} f(r \mid S_r^-) \geq \sum_{r \in R}f(r \mid \tilde{A})
    \geq f(R \cup \tilde{A}) - f(\tilde{A})
    = f(\opt \cup \tilde{A}) - f(\tilde{A})\,.
    \end{equation}
    For any $r \in R$, since $r$ was rejected upon arrival,
    \begin{equation}
    \label{eq:rejected}
    f(r\mid S_r^-) \leq (1+\beta) \sum_{c \in C_r}\nu(c,S_r^-) + \alpha\,.
    \end{equation}
    Thus, applying Lemma~\ref{lem:cgq} we obtain:
    \begin{align*}
    \sum_{r \in R}&f(r \mid S_r^-)
    \leq (1 + \beta)\sum_{r \in R}\sum_{c \in C_r}\nu(c,S_r^-) + k\alpha, &
    \text{(\eqref{eq:rejected} and $|R| \leq k$)}&
    \\
    &\leq \sum_{r \in R}(1+\beta)\biggl[\sum_{e \in \varphi(r) \bb \tilde{S}}\chi(e) + \sum_{s \in \varphi(r) \cap \tilde{S}}\nu(s,\tilde{S})\biggr] + k\alpha, \hspace{-1em} &\text{(Lemma \ref{lem:cgq} (3))}& \\
    &\leq (1+\beta)\biggl[(p - 1)\sum_{e \in \tilde{A}\bb\tilde{S}}\chi(e) + p \sum_{s \in \tilde{S}}\nu(s,\tilde{S})\biggr] + k\alpha, &\text{(Lemma \ref{lem:cgq} (1, 2))}&
    \end{align*}
    where in the last inequality we have also used Lemma~\ref{lem:incremental-properties}~(4), which implies that each $\chi(e)$ and $\nu(s,\tilde{S})$ is non-negative. Combining the above inequality with \eqref{eq:opt-r}, we obtain
    \begin{equation}
    \label{eq:main-1}
    f(\opt \cup \tilde{A}) \leq (1+\beta)\left[(p - 1)\sum_{e \in \tilde{A}\bb\tilde{S}}\chi(e) + p\sum_{s \in \tilde{S}}\nu(s,\tilde{S})\right]  + f(\tilde{A}) + k\alpha\,.
    \end{equation}

    We now bound $f(\tilde{A})$ in terms of the values $\nu(s,\tilde{S})$ and $\chi(e)$. Since $S \subseteq \tilde{A}$ at all times during the algorithm, and $\chi(e) = \nu(e, S)$ at the moment $e$ was evicted, we have $\chi(e) \geq \nu(e, \tilde{A})$ by Lemma~\ref{lem:incremental-properties}~(2). Thus,
    \begin{equation}
    \label{eq:tilde-a-bound}
    f(\tilde{A}) - f(\emptyset) = \sum_{a \in \tilde{A}}\nu(a, \tilde{A})
    = \sum_{s \in \tilde{S}} \nu(s, \tilde{A}) + \sum_{e \in \tilde{A} \bb \tilde{S}}\nu(e, \tilde{A})
    \leq \sum_{s \in \tilde{S}}\nu(s, \tilde{S}) + \sum_{e \in \tilde{A} \bb \tilde{S}} \chi(e)\,,
    \end{equation}
    where the first equation follows from Lemma~\ref{lem:incremental-properties}~(1), and the last inequality follows from Lemma~\ref{lem:incremental-properties}~(2).

    Combining \eqref{eq:main-1} and \eqref{eq:tilde-a-bound} we have:
    \begin{align}
    f(\opt\cup \tilde{A}) &\leq \left((1+\beta)(p - 1) + 1\right)\sum_{e \in \tilde{A}\bb\tilde{S}}\chi(e) \notag \\
    & \qquad + \left((1+\beta)p + 1\right)\sum_{e \in \tilde{S}}\nu(s,\tilde{S}) + f(\emptyset) + k\alpha, \notag \\
    &= (p + p\beta - \beta)\sum_{e \in \tilde{A}\bb\tilde{S}}\!\chi(e) +
    (p+\beta p + 1)\!\sum_{s \in \tilde{S}}\nu(s,\tilde{S}) + f(\emptyset) + k\alpha\,.
    \label{eq:lem-2-main}
    \end{align}
    By Lemma~\ref{lem:incremental-properties}~(1), we have the following bound for the second summation in \eqref{eq:lem-2-main}:
    \begin{equation*}
    (p+\beta p + 1)\sum_{e \in \tilde{S}}\nu(e,\tilde{S}) + f(\emptyset) = (p+\beta p + 1)[f(\tilde{S}) - f(\emptyset)] + f(\emptyset) \leq (p + \beta p + 1)f(\tilde{S})\,.
    \end{equation*}
    Combining this and \eqref{eq:lem-2-main} we obtain:
    \begin{equation*}
    f(\opt \cup \tilde{A}) \leq (p + p\beta - \beta)\sum_{e \in \tilde{A}\bb\tilde{S}}\chi(e) +
    (p+\beta p + 1)f(\tilde{S}) + k\alpha.\qedhere
    \end{equation*}
    \end{proof}

    \section{Calculations for the proof of Theorem~\ref{thm:monotone-p-matroid}}
    \label{sec:calc-proof-theor}
    We recall that
    \[\gamma_i = \gamma_{i-1}\delta_i = \frac{4p\gamma_{i-1}(\gamma_{i-1}-1)}{(\gamma_{i-1}-1+p)^2}\,.
    \]
    Then, to see that $\gamma_i$ is an increasing function of $\gamma_{i-1}$ for $p \geq 1$ and $\gamma_{i-1} \geq 1$, we note that:
    \begin{align*}
    \frac{d}{d\gamma_{i-1}}\gamma_i
    &=
    \frac{4p(\gamma_{i-1} - 1) + 4p\gamma_{i-1}}{(\gamma_{i-1} - 1 + p)^2}
    - \frac{8p\gamma_i(\gamma_{i-1} - 1)}{(\gamma_{i-1} - 1 + p)^3} \\
    & = \frac{4p(\gamma_{i-1} - 1)(\gamma_{i-1} - 1 + p)
    + 4p\gamma_{i-1}(\gamma_{i-1} - 1 + p) - 8p\gamma_{i-1}(\gamma_{i-1} - 1)}{(\gamma_{i-1} - 1 + p)^3}\! \\
    & \geq  \frac{4p\gamma_{i-1}(\gamma_{i-1} - 1)
    + 4p\gamma_{i-1}^2 - 8p\gamma_{i-1}(\gamma_{i-1} - 1)}{(\gamma_{i-1} - 1 + p)^3}\geq 0.
    \end{align*}
    The third line follows from $p \geq 1$ and the final inequality is by $\gamma_{i-1} \geq 1$.

    We now verify the following inequality used at the end of Theorem~\ref{thm:monotone-p-matroid}:
    \begin{equation*}
    \frac{4p\left(p+1+\frac{4p}{i-1}\right)\left(p+\frac{4p}{i-1}\right)}{\left(2p + \frac{4p}{i-1}\right)^2} \leq p + 1 + \tfrac{4p}{i}\,.
    \end{equation*}
    Rearranging both sides and placing over a common denominator gives:
    \begin{align*}
    \frac{4p\left(p+1+\frac{4p}{i-1}\right)\left(p+\frac{4p}{i-1}\right)}{\left(2p + \frac{4p}{i-1}\right)^2}
    &= \frac{4p\left((p+1)(i-1)+4p\right)\left(p(i-1) + 4p\right)}{\left(2p(i-1) + 4p\right)^2}, \\
    &= \frac{4p\left((p+1)(i-1)+4p\right)\left(p(i-1) + 4p\right)}{\left(2p(i+1)\right)^2}, \\
    &= \frac{\left((i-1)(p+1)+4p\right)(i+3)}{(i+1)^2}, \\
    &= \frac{(i-1)(i+3)i(p+1)+i(i+3)4p}{i(i+1)^2},\\
    &= \frac{\left(i^2 + 2i - 3\right)i(p+1)+(i^2+3i)4p}{i(i+1)^2},\\
    \intertext{and}
    p+1+\tfrac{4p}{i} &= \frac{(p+1)i + 4p}{i}, \\
    &= \frac{i(i+1)^2(p+1)+(i+1)^24p}{i(i+1)^2}, \\
    &= \frac{\left(i^2 + 2i + 1\right)i(p+1)+\left(i^2 + 2i + 1\right)4p}{i(i+1)^2}\,.
    \end{align*}
    Then, since $p\geq 1$ and $i \geq 1$,
    \begin{equation*}
    \left(p+1+\tfrac{4p}{i}\right)-\frac{4p\left(p+1+\frac{4p}{i-1}\right)\left(p+\frac{4p}{i-1}\right)}{\left(2p + \frac{4p}{i-1}\right)^2} = \frac{4i(p+1) - 4(i-1)p}{i(i+1)^2} \geq 0.
    \end{equation*}

    \section{Additional Details for the Non-Monotone Case}
    \label{sec:analys-non-monot}

    \subsection{Guessing the value of $f(\opt)$}
    \label{sec:guessing-value-opt}
    Guessing the value of $f(\opt)$ is a common technique in streaming submodular function maximization. Badanidiyuru et al.~\cite{Badanidiyuru:2014ib} showed how to approximate $f(\opt)$ within a constant factor using $\bigO{\log(k)}$ space in a single pass. To avoid extra complications, we show how to guess $f(\opt)$ in two passes and refer the reader to~\cite{Badanidiyuru:2014ib} for an approximation of $f(\opt)$ on the fly. Let $\tau = \max_{e \in X} f(e)$.
    Using submodularity, it is easy to see that $\tau \leq f(\opt) \leq k \tau$. Consider the set
    \begin{equation*}
    \Lambda = \lc 2^{i} \mid i \in \ints, \;  \tau \leq 2^{i} \leq k\cdot \tau \rc.
    \end{equation*}
    Then there exists a value $\lambda \in \Lambda$ such that $\frac{\fopt}{2} \leq \lambda \leq \fopt$. Setting the parameter $\alpha = \e\lambda/(2k)$, we get that $\alpha \in \ld \e\fopt/4k; \e\fopt/2k\rd$. The defined range of $\alpha$ is sufficient for the analysis\footnote{Equation \eqref{eq:non-monotone-bound1} and the bound $\alpha k \leq \e f(\opt)$ are where we need the exact value of $\alpha$, using upper and lower bounds for $\alpha$ yield the same result up to the hidden constant in the term $O(\epsilon)$.}.
    Unfortunately, it is still not possible to know which $\lambda \in \Lambda$ satisfies the property. However, it suffices to run the randomized local-search algorithm for every $\lambda \in \Lambda$ in parallel and output the best solution of all the copies. This operation increases the space complexity by a multiplicative $\bigO{\log_2 k}$ factor, and adds one additional pass to find $\tau$.

    \subsection{Proof of Theorem~\ref{thm:non-monotone-approximation}}
    \label{sec:proof-theorem}
    Here we give a full proof of the following theorem from Section~\ref{sec:rand-multi-pass}:
    \nonmonmain*
    In the same spirit as in Section~\ref{sec:select-param-beta_i}, we show that we can derive a guarantee with respect to the solution $\esp{f(S_i)}$ produced after the $i^\textrm{th}$ pass even when the function is non-monotone. In fact, we show that the analysis of the non-monotone case reduces to the monotone case as shown in the following theorem.
    \begin{theorem}\label{thm:non-mono-upper-bound}
    Let $f$ be a non-negative submodular function. Let the additive threshold $\alpha = \e f(\opt)/2k$ and let $d \geq i>1$. Suppose that at the start of the $i^\textrm{th}$ iteration of the randomized local-search algorithm with a buffer of size $m = 4 d k /\e^2$ we have $( 1 - \e) \fopt\! \leq\! \gamma_{i-1} \esp{f(S_{i-1})} + \gammaoff \esp{f(S'_{i-1})}$. Then,
    \begin{multline*}
        (1- \e )\fopt \leq \min \lc \gamma_{i-1} \delta_i, \lb \frac{p}{\beta_i} + p - 1 \rb \lb 1 - \delta_i \rb + p + \beta_i p + 1 \rc \cdot \esp{f(S_i)} \\
         \qquad + \gammaoff \esp{f(S'_i)},
    \end{multline*}
    where $\delta_i = \frac{\esp{f(S_{i-1})}}{\esp{f(S_{i})}}$.
    \end{theorem}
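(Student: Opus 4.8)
The plan is to mirror exactly the monotone argument of Theorem~\ref{thm:guarantee-main}, producing two separate upper bounds on $(1-\e)\fopt$ in terms of $\esp{f(S_i)}$ and $\esp{f(S'_i)}$ and then taking their minimum. The only differences from the monotone case are that every quantity is now replaced by its expectation and that an extra additive term $\gammaoff\esp{f(S'_i)}$ is carried along in both bounds; since this term is identical in the two bounds, the minimum can be taken over the coefficients of $\esp{f(S_i)}$ alone. The two ingredients I would use are the inductive hypothesis $(1-\e)\fopt \leq \gamma_{i-1}\esp{f(S_{i-1})} + \gammaoff\esp{f(S'_{i-1})}$ assumed in the statement, and inequality \eqref{eq:non-monotone-main}, which has already been established in the body as the non-monotone analogue of Lemma~\ref{lem:single-pass} (it folds in both the lower bound \eqref{eq:non-monotone-bound1} and the estimate $\alpha k \leq \tfrac{\e}{2}\fopt$ coming from $\alpha = \e f(\opt)/2k$).

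For the first bound I would start from the inductive hypothesis and substitute $\esp{f(S_{i-1})} = \delta_i\esp{f(S_i)}$, which holds by the definition $\delta_i = \esp{f(S_{i-1})}/\esp{f(S_i)}$, giving $(1-\e)\fopt \leq \gamma_{i-1}\delta_i\esp{f(S_i)} + \gammaoff\esp{f(S'_{i-1})}$. To convert the last term into one involving $S'_i$ rather than $S'_{i-1}$, I would invoke the update rule $S'_i = \argmax\{f(S'_{i-1}), f(S')\}$ in \mrSLS{}, which guarantees $f(S'_i) \geq f(S'_{i-1})$ for every realization of the algorithm's random choices and hence $\esp{f(S'_i)} \geq \esp{f(S'_{i-1})}$. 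This yields the first bound, $(1-\e)\fopt \leq \gamma_{i-1}\delta_i\esp{f(S_i)} + \gammaoff\esp{f(S'_i)}$.

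For the second bound I would take \eqref{eq:non-monotone-main} and rewrite the progress term using $\esp{f(S_i)} - \esp{f(S_{i-1})} = (1-\delta_i)\esp{f(S_i)}$, again by the definition of $\delta_i$. Collecting the coefficients of $\esp{f(S_i)}$ then gives $(1-\e)\fopt \leq \lb(\tfrac{p}{\beta_i} + p - 1)(1-\delta_i) + p + \beta_i p + 1\rb\esp{f(S_i)} + \gammaoff\esp{f(S'_i)}$. Since both bounds carry the same trailing $\gammaoff\esp{f(S'_i)}$ term, I would combine them by retaining the smaller coefficient of $\esp{f(S_i)}$, which is exactly the $\min\lc\,\cdot\,,\,\cdot\,\rc$ appearing in the claimed inequality.

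I do not anticipate a genuine obstacle: the argument is a direct transcription of the monotone proof with expectations inserted, and the heavy lifting has already been done in deriving \eqref{eq:non-monotone-main}. The one step that requires care — and is the reason the two solutions $S_i$ and $S'_i$ are maintained separately throughout \mrSLS{} — is the monotonicity $\esp{f(S'_i)} \geq \esp{f(S'_{i-1})}$ that lets the first bound be restated in terms of the \emph{current} best offline solution; this is precisely what the running $\argmax$ in the definition of $S'_i$ secures, so no additional assumption is needed.
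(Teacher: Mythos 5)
Your proposal is correct and takes essentially the same route as the paper's own proof: the paper likewise derives the first bound by combining the inductive hypothesis with $f(S'_i) \geq f(S'_{i-1})$ (from the $\argmax$ update in \mrSLS{}), derives the second bound from \eqref{eq:non-monotone-main} via $\esp{f(S_i)} - \esp{f(S_{i-1})} = (1-\delta_i)\esp{f(S_i)}$, and combines them by taking the minimum of the two coefficients of $\esp{f(S_i)}$. No gaps.
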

    \begin{proof}
    From the definition of $\gamma_{i-1}$ and $\delta_i$, it follows that,
    \begin{equation}
    (1 - \e) \fopt \leq \gamma_{i-1} \esp{f(S_{i-1})} + \gammaoff \esp{f(S_{i-1}')} \leq \gamma_{i-1} \delta_i \esp{f(S_{i})} + \gammaoff \esp{f(S'_{i})}\label{eq:bound1}
    \end{equation}
    where in the last inequality we have used the definition of $\delta_i$ and the fact that $f(S'_i) \geq f(S'_{i-1})$, which follows from the way $S'_i$ is defined in Algorithm~\ref{alg:multi-pass-random}.

    On the other hand, $\esp{f(S_{i})} - \esp{f(S_{i-1})} = ( 1-\delta_i)\esp{f(S_{i})}$. Thus, by \eqref{eq:non-monotone-main} we also have:
    \begin{align}
    (1- &{\e}) \fopt  \nonumber \\
    & \leq \lb \tfrac{p}{\beta_i} + p - 1 \!\rb \lb \esp{ f(S_i) } - \esp{f ( S_{i-1} )} \rb\!  +\! (p + \beta p + 1) \esp{f(S_{i})} +  \gammaoff \esp{f(S'_{i})}\nonumber\\
    & = \lb \lb \tfrac{p}{\beta_i} + p - 1 \rb \lb 1 - \delta_i \rb + p + \beta_i p + 1 \rb \esp{f(S_{i})} + \gammaoff \esp{f(S'_{i})}\,.\label{eq:bound2}
    \end{align}
    Since the right-hand side of equation~\ref{eq:bound1} is an increasing function of $\delta_i$ and the right-hand side of equation~\ref{eq:bound2} is a decreasing function of $\delta_i$, the guarantee we obtain is always at least as good as that obtained when these two values are equal.
    \end{proof}

    As in the monotone case, the lemma enables us to derive values of $\beta$ so as to minimize the value of the approximation ratio. The following follows directly from the same calculations as in Section~\ref{sec:select-param-beta_i} and Appendix~\ref{sec:calc-proof-theor}.
    \begin{theorem}\label{thm:non-monotone-approximation}
    Suppose we run Algorithm~\ref{alg:multi-pass-random} with a buffer of size $m = 4 d k /\e^2$ on a arbitrary $p$-matchoid constraint and a  submodular function, with $\alpha = \e f(\opt) /2k$, $\beta_1 = 1$ and $\beta_i = \tfrac{\gamma_{i-1} - 1 -p}{\gamma_{i-1}-1+p}$ where $\gamma_{i}$ is given by the recurrence, $\gamma_1 = 4p$ and $\gamma_i = \tfrac{4p \gamma_{i-1} (\gamma_{i-1} - 1 )}{(\gamma_{i-1} -1 + p)^2}$. Then,
    \begin{align*}
    ( 1- \e) \fopt \leq \lb p+1 + \frac{4p}{i} \rb \esp{f(\tilde{S}_i)} + {\gammaoff} \esp{f(S'_i)}.
    \end{align*}
    In particular after $d = \tfrac{4p}{\e}$ passes,
    \begin{align*}
        (1-\e) \fopt \leq \lb p + 1 + {\gammaoff} + \e \rb \esp{f(\bar{S}_d)}\,.
    \end{align*}
    Under a matroid constraint, Algorithm ~\ref{alg:multi-pass-random} with $\alpha = \e f(\opt) /2k$, $\beta_i = 1/i$ and $d = 2\e^{-1}$ passes outputs a solution $\bar{S}$ such that,
    \begin{align*}
    ( 1- \e) \fopt \leq \lb  2 + {\gammaoff} + \e \rb \esp{f(\bar{S})}\,,
    \end{align*}
    where $\gammaoff$ is the approximation ration of the best offline algorithm for maximizing $f$ under a matroid constraint.
    \end{theorem}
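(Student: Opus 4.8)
The plan is to prove the per-pass guarantee by induction on $i$, replaying almost verbatim the monotone argument of Theorem~\ref{thm:monotone-p-matroid} while quarantining all of the non-monotonicity inside a single additive term. I would carry the invariant $(1-\e)\fopt \leq \gamma_i\esp{f(S_i)} + \gammaoff\esp{f(S'_i)}$, with $\gamma_i$ obeying the stated recurrence $\gamma_1 = 4p$ and $\gamma_i = 4p\,\gamma_{i-1}(\gamma_{i-1}-1)/(\gamma_{i-1}-1+p)^2$. The base case is immediate: with $\beta_1 = 1$ and $S_0 = S'_0 = \emptyset$, substituting into \eqref{eq:non-monotone-main} and discarding the non-negative $f(\emptyset)$ term gives $(1-\e)\fopt \leq (2p-1)\esp{f(S_1)} + (2p+1)\esp{f(S_1)} + \gammaoff\esp{f(S'_1)} = 4p\,\esp{f(S_1)} + \gammaoff\esp{f(S'_1)}$, so $\gamma_1 = 4p$ as claimed.

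For the inductive step I would apply Theorem~\ref{thm:non-mono-upper-bound} with $\delta_i = \esp{f(S_{i-1})}/\esp{f(S_i)}$. Exactly as in the monotone case, the branch $\gamma_{i-1}\delta_i$ is increasing in $\delta_i$ while the other branch is decreasing, so the two may be equalized; solving for the equalizing value reproduces \eqref{eq:delta}, and the choice $\beta_i = (\gamma_{i-1}-1-p)/(\gamma_{i-1}-1+p)$ makes $\gamma_i = \gamma_{i-1}\delta_i$ coincide with the recurrence above. The point that makes this transfer clean is that Theorem~\ref{thm:non-mono-upper-bound} lets the term $\gammaoff\esp{f(S'_i)}$ ride through untouched, using only $\esp{f(S'_i)} \geq \esp{f(S'_{i-1})}$, which holds because $S'_i$ is selected as the better of $S'_{i-1}$ and the fresh offline solution in Algorithm~\ref{alg:multi-pass-random}. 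Thus the non-monotonicity penalty is confined to a single additive term that is monotone in $i$, while the coefficient $\gamma_i$ evolves precisely as in the monotone analysis.

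It then only remains to solve $\gamma_i \leq p+1+\tfrac{4p}{i}$, which is the identical induction already established in Theorem~\ref{thm:monotone-p-matroid} together with the algebra of Appendix~\ref{sec:calc-proof-theor}; I would simply cite it. This yields $(1-\e)\fopt \leq (p+1+\tfrac{4p}{i})\esp{f(S_i)} + \gammaoff\esp{f(S'_i)}$. Taking $i = d = 4p/\e$ makes $\tfrac{4p}{d} = \e$; since $f(\bar S_d) = \max\{f(S_d),f(S'_d)\}$ we have $\esp{f(\bar S_d)} \geq \esp{f(S_d)}$ and $\esp{f(\bar S_d)} \geq \esp{f(S'_d)}$, so bounding both right-hand terms by $\esp{f(\bar S_d)}$ gives $(1-\e)\fopt \leq (p+1+\gammaoff+\e)\esp{f(\bar S_d)}$. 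The matroid statement is identical with $p=1$ and $\beta_i = 1/i$, invoking instead the sharper recurrence of Theorem~\ref{thm:monotone-matroid} (namely $\gamma_i \leq 2(1+\tfrac{1}{i})$), so that $d = 2/\e$ passes yield $\gamma_d \leq 2+\e$.

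I expect the main difficulty here to be bookkeeping rather than a genuine obstacle, since the substantive work is already behind us. In particular, one must apply \eqref{eq:non-monotone-main} for every pass $i \le d$ with the same fixed $\alpha = \e\fopt/2k$ and buffer size $m = 4dk/\e^2$ that were used to derive it, so that the union bound on the inclusion probability $\prob[e \in A_i]$ feeding Lemma~\ref{lem:thm3} stays at most $\e/2$ across all passes and the factor $(1-\e)$ on the left is preserved. The genuinely hard part---bounding how often an element enters the solution via the random buffer and invoking the Buchbinder et al.\ estimate to obtain that $(1-\e)$ factor---has already been discharged in deriving \eqref{eq:non-monotone-main}, which we use here purely as a black box.
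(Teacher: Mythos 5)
Your proposal is correct and follows essentially the same route as the paper: the paper's proof of Theorem~\ref{thm:non-monotone-approximation} is precisely to carry the invariant $(1-\e)\fopt \leq \gamma_i\esp{f(S_i)} + \gammaoff\esp{f(S'_i)}$ through Theorem~\ref{thm:non-mono-upper-bound}, equalize the two branches to recover \eqref{eq:delta} and the recurrence for $\gamma_i$, and then reuse the monotone-case algebra of Section~\ref{sec:select-param-beta_i} and Appendix~\ref{sec:calc-proof-theor} verbatim, finishing with $\esp{f(\bar{S}_d)} \geq \max\{\esp{f(S_d)},\esp{f(S'_d)}\}$. Your treatment is in fact more explicit than the paper's (which only asserts that the result ``follows directly from the same calculations''), and your closing remark about keeping $\alpha$ and $m$ fixed across all $d$ passes so the union bound feeding Lemma~\ref{lem:thm3} stays at $\e/2$ is exactly the right bookkeeping point.
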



    \begin{proof}[Proof of Theorem \ref{thm:non-mono-matroid}]
      \label{proof:non-mono-matroid}
        We assume that we know the value of $f(\opt)$ before hand, which can be accomplished approximately as in Section~\ref{sec:guessing-value-opt}. Let $\e' = \e/p$ with $1/2 \geq \e' > 0$ and let $\alpha = \e' f(\opt)/2k$. We want to obtain an additive error term instead of a multiplicative error term as stated in Theorem~\ref{thm:non-monotone-approximation}. By  Theorem~\ref{thm:non-monotone-approximation},
    \begin{align*}
    ( 1- \e') \fopt & \leq \lb p+1 + \gammaoff + \frac{4p}{d} \rb \esp{f(\bar{S}_d)} \\
    & = \lb p + 1 + \gammaoff \rb \lb 1 + \bigO{d^{-1}} \rb \esp{f(\bar{S}_d)}\,.
    \end{align*}
    Using the fact that $(1 - \e')^{-1} \leq 1 + 2\e'$ for $\e' \in (0, 1/2]$, we get that,
    \begin{align}\label{eq:multiplicative-error}
    \fopt & \leq \lb p + 1 + \gammaoff \rb \lb 1 + \bigO{d^{-1}} \rb \lb 1 + 2 \e' \rb \esp{f(\bar{S}_d)}\,.
    \end{align}
    Since $\e' = \e/p$, setting $d = \bigO{p/\e}$ we finally obtain the desired result:
    \begin{align*}
    \fopt & \leq \lb p + 1 + \gammaoff \rb \lb 1 + \bigO{\e/p} \rb \lb 1 + 2 \e/p \rb \esp{f(\bar{S}_d)} \\
    & \leq \lb p + 1 + \gammaoff + \bigO{\e} \rb \esp{f(\bar{S}_d)}.
    \end{align*}

    For the space complexity, we note that the randomized local-search algorithm stores the buffer $B$ and maintains two past solutions $S_{i}, S'_{i} \in \cI$, together with the current solution $S \in \cI$. Hence, the total space needed is equal to $\bigO{\card{B} + \card{S'_i} + \card{S_i} + \card{S}} = \bigO{m + 3k} = \bigO{p^3k\e^{-3}}$, times an additional factor of $O(\log k)$ for guessing $f(\opt)$. The number of passes is $d = \bigO{p/\e}$.
  \end{proof}

\end{document}